\documentclass[10pt]{article}

\usepackage{IEEEtrantools}
\usepackage{amssymb,amsthm}
\usepackage[pdftex,dvips]{graphicx}
\usepackage{epsfig}
\graphicspath{{figs/}}
\DeclareGraphicsExtensions{.pdf,.eps}
\usepackage{fullpage}
\usepackage{cite}
\usepackage[cmex10]{amsmath}
\usepackage{algorithm}
\usepackage[noend]{algorithmic}
\usepackage{array}
\usepackage{mdwmath}
\usepackage{mdwtab}
\usepackage{eqparbox}
\usepackage[caption=false,font=footnotesize]{subfig}
\usepackage{fixltx2e}
\usepackage{url}
\usepackage{wrapfig}
\usepackage{floatflt}
\usepackage{picins}
\usepackage[]{latexsym}

\newtheorem{defn}{\bf Definition}
\newtheorem{lemma}{Lemma}
\newtheorem{thm}{Theorem}
\newtheorem{fact}{\bf Fact}
\newtheorem{rem}{Remark}

\newcommand{\G} {{\mathcal{G}}}

\renewcommand{\L} {{\mathcal{L}}}

\renewcommand{\P} {{\mathcal{P}}}

\newcommand{\V} {{\mathcal{V}}}
\newcommand{\E} {{\mathcal{E}}}
\newcommand{\C} {{\mathcal{C}}}

\newcommand{\T} {{\mathcal{T}}}
\newcommand{\M} {{\mathcal{M}}}

\newcommand\comment[1]{}

\newcommand{\appsection}[1]{\let\oldthesection\thesection
  \renewcommand{\thesection}{Appendix \oldthesection}
  \section{#1}\let\thesection\oldthesection}

\DeclareMathOperator*{\argmin}{arg\,min}

\makeatother


\title{On Channel-Discontinuity-Constraint Routing in Wireless Networks}

\author{Swaminathan Sankararaman\thanks{Department of Computer Science, University of Arizona, {\sf swami@email.arizona.edu}.}
\and
Alon Efrat \thanks{Department of Computer Science, University of Arizona, {\sf alon@cs.arizona.edu}.} 
\and
Srinivasan Ramasubramanian \thanks{Department of Electrical and Computer Engineering, University of Arizona {\sf srini@ece.arizona.edu}.}
\and
Pankaj K. Agarwal \thanks{Department of Computer Science, Duke University {\sf pankaj@cs.duke.edu}.}
}
\date{}

\begin{document}

\maketitle

\begin{abstract}

Multi-channel wireless networks are increasingly being
employed as infrastructure networks, e.g.\ in metro areas.
Nodes in these networks frequently  employ directional antennas to improve
spatial throughput. In such networks, given a source and destination, it is of
interest to compute an optimal path and channel assignment on every link in the
path such that the path bandwidth is the same as that of the link bandwidth and
such a path satisfies the constraint that no two consecutive links on the path
are assigned the same channel, referred to as ``Channel Discontinuity
Constraint'' (CDC). CDC-paths are also quite useful for TDMA system,
where preferably every consecutive links along a path are assigned different
time slots.

This paper contains several contributions.  We first present an
$O(N^{2})$ distributed algorithm for discovering the shortest CDC-path between
given source and destination. For use in wireless networks, we explain how
spatial properties can be used for dramatically expedite the algorithm. This
improves the running time of the $O(N^{3})$ centralized algorithm  of Ahuja et
al. for finding  the minimum-weight CDC-path. Our second result is a generalized
$t$-spanner for CDC-path; For any $\theta>0$  we show how to construct a
sub-network containing only $O(\frac{N}{\theta})$ edges, such that that length
of shortest CDC-paths between arbitrary sources and destinations increases by
only a factor of at most $(1-2\sin{\tfrac{\theta}{2}})^{-2}$. We propose a
novel algorithm to compute the spanner in a distributed manner using only
$O(n\log{n})$ messages. This scheme can be implemented in a distributed manner
using the ideas of \cite{Arango2009} with a message complexity of $O(n\log{n})$
and it is highly dynamic, so addition/deletion of nodes are easily handled in a
distributed manner. An important conclusion of this scheme is in the case of
directional antennas are used. In this case, it is enough to consider only the
two closest nodes in each cone.

\end{abstract}

\section{Introduction}

Wireless infrastructure networks (WINs) are gaining prominence as they
are being increasingly deployed in metro
areas to provide ubiquitous information access~\cite{survey}. WINs
provide a low-cost scalable network, support broadband data, and allow
use of unlicensed spectrum. WINs have a wide area of applications,
including public internet access \cite{survey}, PORTAL \cite{Portal},
video streaming \cite{video_streaming}, and underground mining \cite{mining}.
In order to increase the bandwidth in the WINs, nodes employ multiple
wireless transceivers (interface cards) to achieve simultaneous
transmission/reception over multiple orthogonal channels.

Recent research has focused on effectively harvesting the available
bandwidth in a wireless network. The wireless interference constraint
is the key factor that limits the achievable throughput.  The interference
is encountered in two ways: (1) a node may not receive from
two different nodes on the same channel at any given time; and (2)
a node may not receive and transmit on the same channel at any given
time. Moreover, if omnidirectional antenna is employed, then there may
be at most only one node transmitting on a channel in the vicinity of a
node that is receiving on that channel. The interference constraints in wireless
networks may be divided into two categories~\cite{BROADNETS08}:
\textit{inter-flow} and \textit{intra-flow}. The inter-flow
interference refers to the scenario where two links belonging to different
flows cannot be active (on the same channel) at the same time as one
receiver will experience interference due to the other transmission.
The intra-flow interference refers to the scenario where two links
belonging to the same flow cannot be active (on the same channel)
at the same time. The same problem arises also in the TDMA setting,
where a node can be used for streaming applications, but has to
receive and transmit messages in different time slots.

\textbf{Prior Work}: The problem of routing and channel assignment in WINs
refers to computing paths and channel assignment on the paths such that there
are no inter-flow and intra-flow interferences. If the bandwidth of a link is
$B$, then the end-to-end throughput on the path is also $B$ as all the
links in the path can be active simultaneously. The problem of joint
routing and channel assignment is hard when nodes employ omnidirectional
antennas, hence is typically solved as two independent sub-problems.
For a given set of calls where routing is known, the problem of channel
assignment may be mapped to distance-2 vertex and edge coloring
problems~\cite{distance-2}. Using such a mapping, the objective is to compute
channel assignment satisfying the limit on the number of transceivers at each
node. Distance-2 vertex and edge coloring problems are well known NP-hard
problems, hence the problem of channel assignment for networks employing
omnidirectional antennas. Various approximation algorithms and heuristics have
been developed for distance-2 coloring with different objectives, such
as minimizing interference, maximizing throughput, and minimizing
the number of required
channels~\cite{vertex_1,vertex_2,vertex_3,vertex_4,edge_1,edge_2,edge_3} .
An approach based on balanced incomplete block design is developed
in~\cite{BIBD} to assign channels for each interface card such that
the communication network is 2-edge-connected with minimum interference.
In \cite{SAFE}, a heuristic based on random channel assignment policy
is developed to maintain connectivity of the network.

Among the works that compute paths for multi-channel networks, \cite{Draves}
develops a routing protocol to find an efficient path with low intra-flow
interference, by taking into account link loss rate, link data rate,
and channel diversity. In \cite{iAWARE}, shortest path with low interference
is computed based on an extension to the AODV protocol to account
for (inter- and intra-flow) interference and link data and loss rates.
In the space of wireless network design for a given static traffic,
centralized and distributed approaches for joint channel assignment
and routing in the multi-interface WMN with the objective of maximizing
throughput are developed in \cite{Raniwala1,Raniwala2,Bhatia,Meng},
while \cite{Kodialam} considers the objective of achieving a given
data rate.

It has been shown in~\cite{capacity_improve,capacity_improvement,capacity_bound}
that the capacity of WINs may be further improved by increasing spatial
reuse by employing directional antennas. The problem of channel assignment
in networks employing directional antennas may be mapped to the edge-coloring
problem~\cite{Ramanathan2}. In \cite{DMesh}, a network architecture
with nodes employing non-steerable directional antennas is developed.
The authors develop approaches for routing and channel assignment
by considering tree-based topologies rooted at ``gateway'' nodes.

There are indications that the problem of finding a path and channel
assignment such that all links can be active simultaneously is NP-complete when
nodes employ omnidirectional antennas. Our algorithms are designed for the
frequent cases for which the effect of interference can be
ignored. For example, when directional antennas are used, as their
prices and accuracy are rapidly improving, the nodes may be carefully placed
such that any two independent links can use the same channel.

The first work addressing joint routing and channel assignment in
WINs under such scenarios is the recent paper by Ahuja et. al.
\cite{BROADNETS08}. In this case, the path bandwidth will be the same as an
individual link bandwidth if no two consecutive links on the path are assigned
the same channel. We refer to the constraint on channel assignment as channel
discontinuity constraint (CDC) and any path that satisfies the constraint as a
CDC-path. In graph theory literature, CDC-paths are referred to as {}``properly
edge-colored'' paths, where channels correspond to colors. From now
on, whenever it is mentioned that a CDC-path is found, it is not
mentioned explicitly but it is implied that a channel assignment is
also found.

We can overcome, to some extent, the difficulties that interference
causes in the case of omnidirectional antennas using the known
technique of {\em network coding}. Consider a streaming application, where node
$s_{i}$ forwards the message $m_{j}$ to node $s_{i+1} $, and receives the
message $m_{j+1}$ at the same time from $s_{i-1}$. Meanwhile, node $s_{i+1}$
forwards a message $m_{j-1}$ received earlier from $s_{i}$ to $s_{i+2}$. In the
case where omnidirectional antennas are used, and we have applied the
CDC-path protocol, it is quite possible that the frequency of
transmission of both $s_{i-1}$ and $s_{i+1}$ is $f_{1}$, as it is
different than the frequency $f_2$ used by $s_{i}$ (see Figure
\ref{fig:stream}).  However, as demonstrated in the paper by
\cite{pnc}, since  $s_{i}$ ``knows'' $m_{j-1}$ it can subtract it from
the combined signal $m_{j-1} + m_{j+1}$. See \cite{pnc} for details.
However, this is not a trivial hardware modification, so we would
concentrate in this paper mostly with other applications where CDC
path are useful.

\begin{figure}[t]
\centering
\includegraphics[width=0.9\columnwidth]{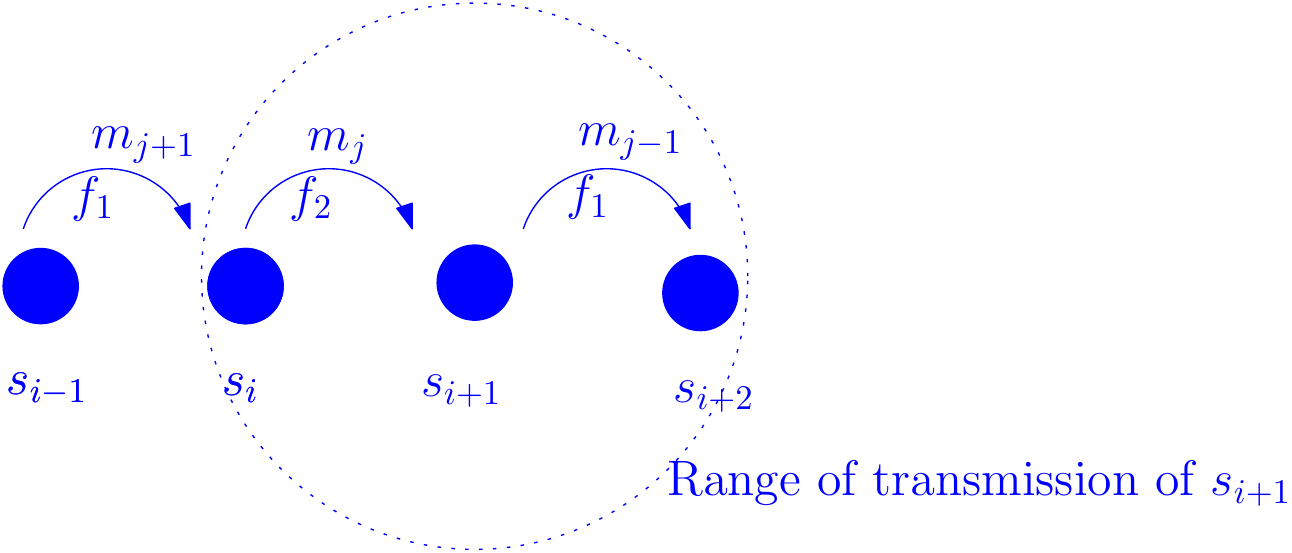}

\caption{A streaming application where node $s_{i}$ forwards message
$m_{j}$ to node $s_{i+1}$ using frequency $f_{2}$ and receives message
$m_{j+1}$ at the same time from $s_{i-1}$ using frequency $f_{1}$.
Node $s_{i+1}$ uses frequency $f_{1}$ to send message $m_{j-1}$
received earlier to $s_{i+2}$. This causes interference at $s_{i}$.}

\label{fig:stream}
\end{figure}

\textbf{Motivation:} As stated, CDC-paths are particularly useful in streaming
applications where the transmissions are continuous and any period of inactivity
reduces bandwidth. CDC-paths are also quite useful in the context of a TDM or
FDM/TDM system, where the channel may be thought of as a time slot or a
combination of frequency and time slot. In TDM systems, it is desirable to
assign different time slots to adjacent transmissions along the path and this
corresponds exactly to the Channel Discontinuity Constraint where each channel
is a time slot. Thus, if $(u,v)$ and $(v,w)$ are adjacent links along the path,
the throughput is maximized when they are using different time slots. Now,
because of other ``conversations'' which have ended, certain time slots may
become available and finding new CDC-paths may help maintain the maximum
bandwidth.

\textbf{Our Contributions:} We present a distributed algorithm for finding the
shortest CDC-path between two nodes in a graph. The algorithm requires the
exchange of $O(N^{2})$ fixed-size messages in total ($O(N)$ per node) where $N$
is the number of nodes. This improves the previous algorithm presented in
\cite{BROADNETS08} which is centralized, and requires $O(N^{3})$ running time.

The second contribution is the construction of a sparse graph $\G'$ which is a
$t$-Spanner for the problem of CDC routing, where $t$ is a controlled parameter.
The number of links in $\G'$ is $O(N)$. By spanner, we mean that for any fixed
$t$ as close as one wishes to $1$, and for any pair of nodes $u$ and $v$ in the
network $\G$, if $\G$ contains a route between then so does the spanner $\G'$,
and its length is longer than the original route by only a factor of $t$.  Our
spanner is based on the Yao graph\cite{Yao}. The basic idea is that at a node,
we divide the region around the node into sectors and store only a few neighbors
in each sector. The parameter $t$ is dependent on the number of sectors. This is
particularly applicable when using directional antennas. For example, if the
antennas' covering range can be abstracted as a sector of angle $30$ degrees,
then by storing for every node just the few neighbors in each sector, we retain
connectivity and gain routes which are no longer than $4.3$ times the
theoretical bounds. This spanner is also highly dynamic, so insertion/deletion
of links is carried out easily.

The rest of the paper is organized as follows. In Section \ref{sec:System
Model}, we discuss the network model and interference constraints. In Section
\ref{sec:prob-form}, we review the technique developed in \cite{BROADNETS08}
that explains how the CDC problem can be expressed as a matching problem. In
Section \ref{altpath}, we present a distributed algorithm for finding a single
CDC-path between $s$ and $d$ which requires the sending of a total of $O(N^{2})$
messages. In Section \ref{spanner}, we discuss a t-spanner for WINs containing
$O(N)$ links. Finally, we present the conclusion and some future work.

\section{System Model}
\label{sec:System Model}
Consider a multi-channel wireless network, where ${\C}$ denotes the set of
orthogonal channels each with bandwidth $B$. Let $C=|{\C|}$. Let ${\V}$ denote
the set of nodes each equipped with $C$ transceivers that may be tuned to any of
the orthogonal channels. Let $R$
be the transmission range of a node. We refer to two nodes as \emph{neighbors}
if the Euclidean distance between them is not greater than $R$. Let $\G(\V,\L)$
denote the connectivity graph, where ${\L}$ denotes the set of links. A link
connects two neighboring nodes.

WINs may be designed such that the number of links that can be active
simultaneously can be maximized. A node $z$ is said to be collinear with a
transmission from $x$ to $y$ if node $z$ cannot receive on the same channel as
that used by the transmission from $x$ to $y$. The interference at node $z$ due
to the transmission from $x$ to $y$ results in signal-to-noise ratio that is
lower than the threshold for decoding the received signal. The collinearity
constraint results in the dependence on channel assignment between links that
are not adjacent to each other. (hence resulting in distance-2 coloring
problems).

While the collinearity constraint is inherent in networks employing
omnidirectional antennas, they can be eliminated completely by careful placement
of nodes when directional antennas are employed. In such carefully planned
wireless infrastructure networks, there are only two interference constraints:
(1) a node may not receive on a channel from more than one node at the same
time; and (2) a node may not transmit and receive on a channel at the same time.
These two constraints may lead to intra-flow interference, however it is limited
to only two adjacent links. In order to avoid intra-flow interference, no two
consecutive links in the path are assigned the same channel.

We assume that the list of channels that a node can transmit/receive on is known
at all times. The channels available on a link between neighboring nodes $u$ and
$v$ is simply the set of common channels at the two nodes. Every call is assumed
to have a bandwidth requirement of one channel capacity. We assume that calls
are bidirectional and the same channel will be shared in both directions on a
link. We also do not care if the communications required for the algorithm
satisfy CDC, since, typically, a control channel is used to transmit
network-related information rather than the actual data transmissions in the
network and the control messages are also typically very short.

\section{Problem Formulation}
\label{sec:prob-form}

\noindent
\textbf{Problem Statement:} {
\em
Given a multi-channel wireless network with no collinear interference, the set
of available channels on every link, the cost of the links, and a node pair
$(s,d)$, find the shortest path between $s$ and $d$ along with a channel
assignment on every link of the path such that no two consecutive links in the
path are assigned the same channel.
}

\begin{defn}
We call a path a \textbf{CDC-path} if we can assign channels $c$ to each link
$(u,v)$ along this path and no two adjacent links have the same channel
assigned.
\end{defn}

Ahuja et. al. \cite{BROADNETS08} showed the equivalence of computing the minimum
cost CDC-path to the computation of minimum cost perfect matching (MCPM) using
Edmonds-Szeider (ES) expansion for nodes. For the sake of completeness, we give
a description of their method here. In the ES-expansion graph, we expand each
node $x$, except the source and destination nodes, into $2C+2$ sub-nodes,
denoted $(x_{1},x_{1}'),(x_{2},x_{2}'),\ldots,(x_{C},x_{C}')$, each pair
corresponding to a channel $c\in{\C}$. We also add sub-nodes $(x_{g},x_{g}')$.
Each pair is connected to each other and we refer to these links as {\em channel
link}. Every sub-node $x_{c}'$ is also connected to $x_{g}$ and $x_{g}'$. Assign
cost 0 to all the above links in the expanded node. Ahuja et. al.
\cite{BROADNETS08} also noted that these expansions may be modified to reduce
the number of links in the following manner. For links with two or less channels
available on them, expand as before. For links with three or more channels, the
additional links may be eliminated since, for any CDC-path which uses this link,
we can always find a channel assignment from the three corresponding channels.

If two nodes $x$ and $y$ are connected in ${\G}$ and have a set of channels
available between them, then the sub-nodes $x_{c}$ and $y_{c}$ corresponding to
the available channels are also connected. Connect $s$ and $d$ to the sub-nodes
of its neighboring nodes corresponding to the available channels between them.

Now, the problem of computing the shortest CDC-path is the same as computing the
minimum-cost perfect matching in the expanded graph. A perfect matching in a
graph is a set of non-adjacent edges, i.e., no two edges share the same vertex,
such that all the vertices in the graph are covered. A minimum cost perfect
matching is defined as a perfect matching where the sum of the cost of the edges
in the matching is minimum.

We illustrate the equivalence of MCPM and CDC-path using an example. Consider
the example network and its ES-expansion as shown in Figure
\ref{fig:eg-Edmonds-Szeider expansion}. Numbers over links represent the
available channels on the corresponding links. The link costs are assumed to be
1 for this illustration, hence the objective is to compute the minimum hop
CDC-path. The edges in the matching are shown in bold in the expanded graph. The
nodes not present in the shortest path from $s$ to $d$ would find matching
within itself (see nodes $y$ and $z$). Intermediate nodes (if any) in the
shortest $s$ to $d$ path will have exactly two incident external edges in the
matching (see node $x$). Clearly, the returned path $s$--$x$--$d$ with channels
1 and 3 over links $s$--$x$ and $x$--$d$, respectively, is the shortest CDC-path
between $s$ and $d$ in terms of hop length.

\begin{figure}
\centering
\includegraphics[width=0.75\columnwidth]{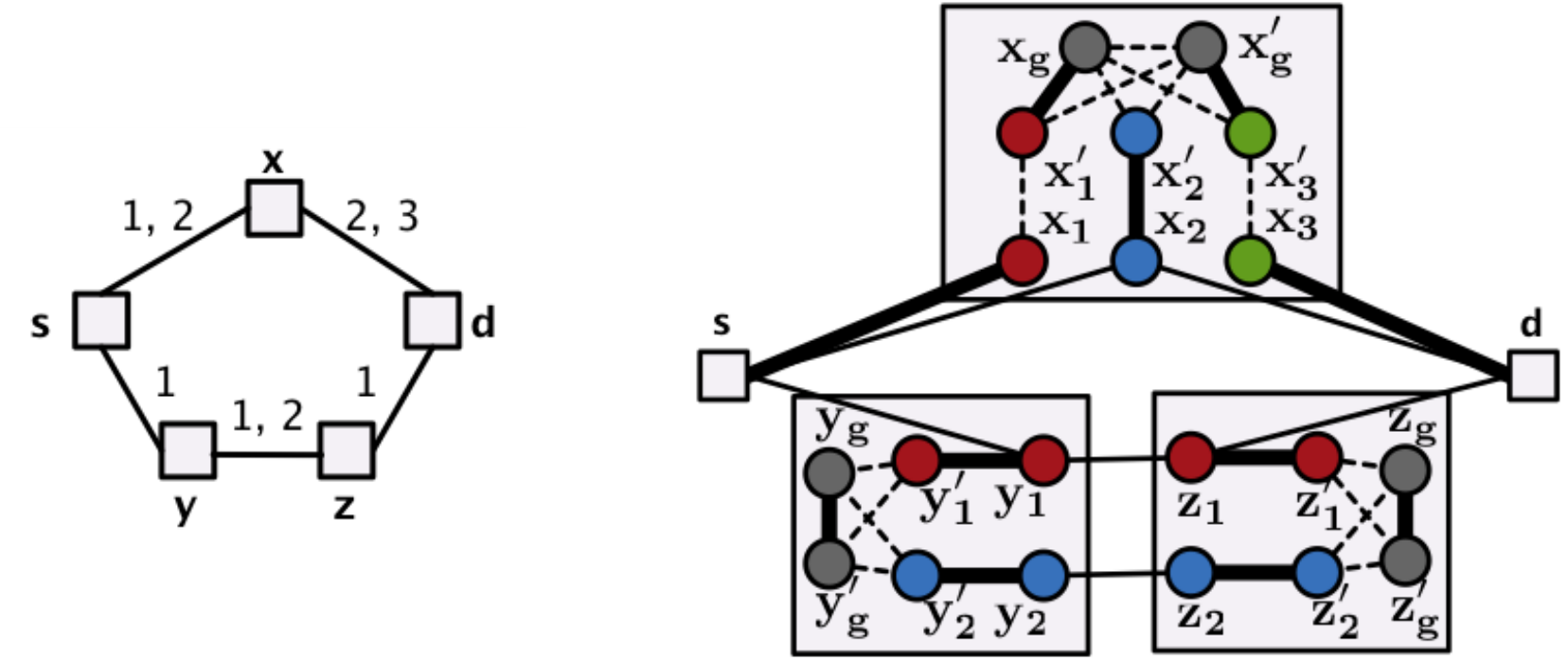}

\caption{Example network with source node $s$ and destination node $d$ and its
ES expansion. The bold links in the expansion graph show the links in the
perfect matching set. Observe that the sequence of external links that are in
the matching provide the CDC-path from $s$ to $d$: $s$--$x$--$d$.}

\label{fig:eg-Edmonds-Szeider expansion}
\end{figure}

The best known sequential implementation of Edmonds' minimum cost perfect
matching algorithm is by Gabow \cite{Gabow} with complexity $O(|\V||\L|)$ where
$|\V|$ and $|\L|$ are the number of nodes and links respectively. Since the
number of nodes in the expanded graph is $O(C|\V|)$ and the number of links is
$O(C(|\L|+|\V|))$, the complexity of computing a solution to MCPM is
$O(|\V||\L|C^{3})$.  However, this complexity can be reduced by observing that
most of the matching computed in the MCPM are within the nodes itself (except
for the links involved in the path).  A closer look reveals that we can start
with an expanded graph with a partial matching where all the internal edges
$x_{c}$ and $x_{c}'$ $\forall c$ at node $x$ and $x_{g}$ and $x'_{g}$ are
matched. If this is the case, then the problem of computing the minimum cost
perfect matching is transformed to computing the minimum cost alternating path.
We thank Kasturi Varadarajan for this observation. It is interesting to note
that Varadarajan and  Agarwal have designed algorithms\cite{kasturi-divconq,kasturi-approx} 
for exactly\cite{kasturi-divconq} and approximately\cite{kasturi-approx} finding a
minimum-weight matching in a geometric setting. In the following sections, we 
develop a distributed algorithm to compute the shortest alternating path with 
better bounds than Ahuja et. al.\cite{BROADNETS08}.

\section{\label{altpath}Finding the Shortest CDC Path}
We are given the expanded graph $\G(\V,\E)$ together with the matching $\M$ of
zero weight (all vertices except $s$ and $d$ internally matched). For any node
$u$, let $MATE(u)$ denote the mate of $u$ in the matching $\M$.

\begin{defn}
Between any two nodes $s$ and $d$, we define an \emph{\textbf{alternating path}}
to be a path with alternating unmatched and matched edges.
\end{defn}

\begin{defn}
For any node $u$, we define a \emph{\textbf{co-link path}} from $s$ to $u$ as an
alternating path $\pi=\{u_{0}=s,u_{1},u_{2},\dots,u_{k}=u\}$ where
$(s,u_{1}),(u_{k-1},u)\notin \M$.
\end{defn}

Our algorithm shares many ideas with Edmonds' algorithm. It works by finding
minimum weight co-link paths similar to each phase of Edmonds' algorithm. Each
node $u$ maintains two distances: $d_{T}[u]$ and $d_{S}[u]$, corresponding to
the minimum weight co-link paths to itself and $MATE(u)$ respectively. Through
the course of the algorithm, each node is given labels from the set $\{S,T,F\}$.
We say a node $u\in S$ if it is given an $S$ label and similarly for the other
labels. Initially, $s\in S$ and every other node $u\in F$. Each node $u$ also
maintains the current known distances $d_{S}$ and $d_{T}$ of its neighbors in
$S$. During the algorithm, certain odd subsets of vertices are termed as
\emph{blossoms}.

\begin{defn}
A \emph{\textbf{blossom}} $B$ is an odd circuit in $G$ for which $M\cup B$ is a
perfect matching for all vertices in $B$ except one. The lone unmatched vertex
is termed as the \emph{\textbf{base}} of the blossom.
\end{defn}

Through the course of the algorithm, a node $u$ may be added to one or more
blossoms. Hence, it maintains the ID of the base of the outermost blossom it
belongs to (the reader is referred to the book by Lawler\cite{Lawler1976} for
the definition of outermost blossom). Let $b[u]$ denote this ID. When a node $u$
is added to $S$ for the first time, it stores the corresponding ID of the parent
of $u$, $P_{S}[u]$. When a node $u$ is added to $T$ for the first time, if it is
not the base of a blossom, it stores the corresponding ID of the parent of $u$,
$P_{T}[u]$. If it is the base of a blossom $B$, $B$ is now a subblossom of a new
blossom $B'$. In this case, $u$ stores the edge $(w,x)$ which is the first
unmatched edge in $B'$ connected to $B$ and not in $B$. For more details, see
\cite{Lawler1976}.

The algorithm works in three phases: FINDMIN, BLOSSOM and GROW. Before we
proceed, let us define a few expressions. For a node $u$, let 
\[
val[v] = \begin{cases}
\frac{d_{S}[u]+d_{S}[v] + w(u,v)}{2} &\text{ if $v\in S$}, \\
d_{S}[u]+w(u,v) &\text{ if $v\in F$}
\end{cases}
\]
Each node $u$ maintains an ``examined'' state for each neighbor $v\in S,F$. Let
$minval[u] = \displaystyle{\min_{v} val[v]}$ and $v_{min} =
\displaystyle{\argmin_{v\text{ not yet examined}} val[v]}$.  The steps of FINDMIN are as follows.

\noindent\textbf{(Step $1$)} Each node $u$ computes $minval[u]$ and the
corresponding $v_{min}$. 

\noindent\textbf{(Step $2$)} If $v_{min}\in S$, it sends a message to $v_{min}$
requesting $b[v_{min}]$. If $b[v_{min}]=b[u]$, the two nodes belong to the same
blossom and no blossom discovery is necessary. In this case, $u$ marks $v$ as
examined. The process repeats from Step $1$ to discover a new $v_{min}$. This
continues until the minimum is achieved for a node $v\in F$ or a node $v\in S$
and $b[v]\neq b[u]$.

\noindent\textbf{(Step $3$)} Each node $u$ sends a message $\langle
v,w,minval[u]\rangle$ to $s$ in the following manner. Each node $u$ waits until
it has received messages from all its children $v\in S,T$. Then, $u$ finds the
minimimum $minval[v], v\in N(u)$ where $N(u)=\{v:v\in S,T\text{ and }(u,v)\in
\E\}$. $u$ sends the message $\langle w,x,minval[u]\rangle$ to its parent where
$(w,x)$ is the edge for which the minimum was attained. For nodes $u$ in
blossoms, the parent depends on which set it was added to first. When $s$
receives all messages, it sends a message to nodes $w$ and $x$ where edge
$(w,x)$ is the one for which the minimum was attained. $w$ and $x$ mark each
other as examined.

The second stage of the algorithm is the \textbf{BLOSSOM} phase which executes
the blossom discovery process for an edge $(u,v)$ corresponding to a new
outermost blossom $Q$. This is done by sending messages from $u$ and $v$ towards
$s$ till the base of the blossom $b$ is found. The way in which these messages
are sent follows the description given by Shieber and Moran\cite{Shieber1986}.
Once this is done, $b$ sends messages throughout the blossom and informs all
nodes $u\in Q$ of their membership by assigning $b[u]=b$. In addition, for all
new nodes in the blossom, the corresponding parents and the corresponding
alternate distance $d_{s}$ or $d_{t}$ are assigned. 

The \textbf{GROW} phase extends the tree by adding one node to $T$ and one to
$S$. Let $(u,v)$ be the edge used to add $v$ to $T$ and $MATE(v)$ to $S$. $v$ is
added to $T$ and $MATE(v)$ to $S$. The $F$ labels are removed and the following
values are assigned:
\begin{IEEEeqnarray}{rCl}
d_{T}[v] &=& d_{S}[u]+w(u,v) \nonumber \\
d_{S}[MATE(v)] &=& d_{T}[v] \nonumber \\ 
P_{S}[MATE(v)] &=& v \nonumber \\
P_{T}[v] &=& u \nonumber
\end{IEEEeqnarray}

$MATE(v)$ sends a message $\langle v,d_{S}[MATE(v)]\rangle$ to all its neighbors
$w\in S$ informing them of its new $S$ label and its distance $d_{s}$.

The algorithm works in $O(n)$ iterations. In each one, it executes FINDMIN. Let
$\displaystyle \langle u,v,minval\rangle = \argmin_{u\in S} minval[u]$. If
$u,v\in S$, we execute BLOSSOM. Otherwise, we execute GROW. The overall steps of
the algorithm are outlined in Figure \ref{fig:algweighted}.

\begin{figure}[htbp]
\fbox{
\begin{minipage}{0.95\columnwidth}
\begin{algorithmic}

\STATE \textbf{Algorithm SHORTEST-CDC}
\vspace{2mm}
\STATE Set $s\in S$ and $\forall u\neq s$, $u\in F$

\STATE Set $d_{S}[s] = 0$ and $d_{S}[v] = d_{T}[v] = \infty$ for all $v\in F$

\STATE Execute FINDMIN. Let $(u,v)$ be the edge for which the minimum was
attained.

\IF{$u,v \in S$}

\STATE Execute BLOSSOM

\ELSE

\STATE Execute GROW

\ENDIF

\STATE Once $d$ is removed from F, the algorithm terminates.

\end{algorithmic}
\end{minipage}}
\caption{Algorithm SHORTEST-CDC for finding the shortest CDC path from $s$ to
$d$}

\label{fig:algweighted}
\end{figure}

We now prove the correctness of the algorithm by showing that, at each stage of
SHORTEST-CDC, the same edge is selected as in each phase of Edmonds' algorithm.
From now on, whenever we mention Edmonds' algorithm, we refer to each phase of
Edmonds' algorithm. 

\begin{fact}
\label{fac:matchzero}
For each edge $(u,v)\in \M$, $w(u,v)=0$
\end{fact}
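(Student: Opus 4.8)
The plan is to derive this directly from the ES-expansion construction reviewed in Section~\ref{sec:prob-form} together with the definition of $\M$ given at the start of Section~\ref{altpath}. First I would recall that every vertex of the expanded graph $\G$ other than $s$ and $d$ is a sub-node of some expanded node $x$, and that $\M$ is precisely the matching that pairs each such sub-node with its mate inside the same expanded node: $x_c$ with $x_c'$ for every channel $c$, and $x_g$ with $x_g'$. In particular, no edge of $\M$ crosses between two distinct expanded nodes, and $s$ and $d$ are left unmatched in $\M$ (consistent with $\M$ being a partial matching that an augmenting $s$--$d$ path will later complete).

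Next I would invoke the cost assignment fixed in the construction: all links internal to an expanded node — the channel links $(x_c,x_c')$, the links joining each $x_c'$ to $x_g$ and $x_g'$, and the link $(x_g,x_g')$ — are assigned cost $0$, whereas nonzero costs occur only on the external links connecting sub-nodes of different original nodes (these carry the link costs of $\G$). Since every edge of $\M$ is, by the previous paragraph, one of these internal links, each such edge has weight $0$. This also re-justifies the description of $\M$ as the ``matching of zero weight.''

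There is essentially no obstacle here: the statement is a bookkeeping consequence of how the expanded graph and its initial matching were set up. The only point deserving a line of care is that $\M$ is never modified during SHORTEST-CDC itself — the algorithm builds an alternating search forest and augments $\M$ only implicitly at termination — so that the invariant ``$w(u,v)=0$ for all $(u,v)\in\M$'' persists through the phases in which Fact~\ref{fac:matchzero} is invoked. To make that explicit I would observe that BLOSSOM and GROW update only labels, distances $d_S,d_T$, parent pointers, and blossom membership, and in particular never alter the edge set $\M$; hence the claim holds unchanged throughout the algorithm.
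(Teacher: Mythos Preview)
Your argument is correct and is exactly the intended justification: the paper states this as a \textbf{Fact} without proof because it is immediate from the ES-expansion construction in Section~\ref{sec:prob-form} (all internal links of an expanded node are assigned cost $0$) together with the definition of $\M$ at the start of Section~\ref{altpath} (all vertices except $s$ and $d$ are internally matched). Your additional remark that $\M$ is never modified during SHORTEST-CDC is sound but unnecessary here, since the fact is invoked only as a property of the initial matching.
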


We now give an outline of Edmonds' algorithm for the sake of completeness. For
more details, see \cite{Lawler1976}. Each node $u$ is associated with a dual
variable $y(u)$ and each odd subset of vertices $Q$ is assigned a dual variable
$z(Q)$. Note that $z(Q)>0$ only for blossoms and $z(Q)=0$ for every other odd
subset. Each node and blossom belong to one of three sets $\{S,T,F\}$.
Initially, $s\in S$ and everything else is in $F$. An edge is termed as tight if
$y(u)+y(v)+\sum_{u,v\in Q} z(Q)=0$. At each step, Edmonds' algorithm searches
for tight edges in order to close blossom or grow the tree. If there are no
tight edges, it makes a dual change. We choose $\delta=\min
(\delta_{1},\delta_{2},\delta_{3})$, where
\begin{IEEEeqnarray}{rCl}
\displaystyle
\delta_{1} & = & \min_{\text{non-trivial blossom }Q\in T} \frac{-z(Q)}{2}
\label{eqn:del3}\\
\delta_{2} & = & \min_{u\in S, v\in F} w(u,v)-\left(y(u)+y(v)\right) 
\label{eqn:del2}\\
\delta_{3} & = & \min_{u,v\in S}
\frac{w(u,v)-\left(y(u)+y(v)\right)}{2}\label{eqn:del1}
\end{IEEEeqnarray}
For each node $u\in S$, we set $y(u)=y(u)+\delta$ and for each node $u\in T$,
$y(u)=y(u)-\delta$. For each outermost blossom $Q\in S$ (resp. $Q\in T$),
$z(Q)=z(Q)-2\delta$(resp. $z(Q)=z(Q)+2\delta$). Now, the only case where a
blossom $Q\in T$ is when it shrunk at the start of the algorithm. In our case,
there are no shrunk blossoms at the beginning. Hence, we only need to worry
about $\delta_{2}$ and $\delta_{3}$.

\begin{lemma}
\label{lem:matchtight}
For each edge $(u,v)\in \M$, 
\[
\displaystyle{y(u)+y(v)+\sum_{u,v\in Q} z(Q)=0}
\]
throughout the course of Edmonds' algorithm.
\end{lemma}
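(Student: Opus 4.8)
The plan is to prove Lemma~\ref{lem:matchtight} by induction on the number of dual adjustments and structural operations (GROW, BLOSSOM) performed by Edmonds' algorithm. The base case is the initial configuration: every node $u$ has $y(u)=0$, no blossoms exist (recall the remark that there are no shrunk blossoms at the start), so for each $(u,v)\in\M$ we trivially have $y(u)+y(v)+\sum_{u,v\in Q}z(Q)=0$. For the inductive step I would show that each of the two kinds of events preserves the invariant: (i) a dual change by $\delta=\min(\delta_2,\delta_3)$, and (ii) a GROW step, and (iii) a BLOSSOM step. (Since there are no $T$-blossoms, $\delta_1$ never binds, so only $\delta_2,\delta_3$ matter.)

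For a dual change, the key observation is that a matched edge $(u,v)\in\M$ always has its two endpoints labelled oppositely or identically in a controlled way. Concretely: once an edge of $\M$ has been drawn into the alternating tree, one endpoint is an $S$-node and the other is the $T$-node that was matched to it in the GROW step (this is exactly how GROW pairs $v\in T$ with $MATE(v)\in S$), and if a matched edge has not yet been reached then both endpoints are still in $F$ and untouched. In the first case, $y(u)$ increases by $\delta$ and $y(v)$ decreases by $\delta$ (or vice versa), so $y(u)+y(v)$ is unchanged; moreover any outermost blossom $Q$ containing both $u$ and $v$ is either an $S$-blossom (then $z(Q)$ drops by $2\delta$, but then both $u$ and $v$ lie in $Q$ so they are both $S$-nodes and the $y$-part changes by $+2\delta$, compensating) — here I would lean on the standard bookkeeping from Lawler that within an outermost blossom all vertices carry the same $S$/$T$ label as the blossom, so the net change $\Delta(y(u)+y(v)+\sum z(Q))$ telescopes to $0$. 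In the second case ($u,v\in F$) nothing on the left-hand side changes. Hence the sum stays $0$.

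For a GROW step using edge $(u,v)$ with $u\in S$, $v$ added to $T$ and $MATE(v)$ added to $S$: the only matched edge whose label-status changes is $(v,MATE(v))$, which moves from ``$F$–$F$'' to ``$T$–$S$''. But a label change does not by itself alter $y$ or $z$ values — those only change on dual adjustments — so the left-hand side for $(v,MATE(v))$ is still $y(v)+y(MATE(v))+\sum z(Q)$, which was $0$ before and is $0$ after. For a BLOSSOM step forming a new outermost blossom $Q$ from an odd circuit: $z(Q)$ is initialized to $0$, so adding the term $\sum_{u,v\in Q}z(Q)$ to the left-hand side of any matched edge inside $Q$ adds $0$; and no $y$-values change. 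So the invariant is preserved. Combining the three cases with the base case completes the induction.

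The main obstacle I expect is the blossom bookkeeping in the dual-change case: making precise the claim that every vertex inside an outermost blossom $Q\in S$ carries label $S$ (so that the $-2\delta$ change in $z(Q)$ is exactly cancelled by the $+\delta$ changes in $y$ at its two matched endpoints that lie in $Q$), and handling matched edges that straddle a subblossom boundary versus those strictly inside it. This is where I would cite Lawler~\cite{Lawler1976} for the structural facts about nested blossoms and the fact that a matched edge of $\M$ lying partly in a blossom has both endpoints in the same outermost blossom (since blossoms are unions of matched and unmatched edges along an odd circuit, and the base is the unique exposed vertex). Once that structural fact is in hand, the arithmetic is a short telescoping computation and the lemma follows.
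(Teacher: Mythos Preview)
Your inductive argument is essentially sound, but it is doing far more work than necessary and takes a genuinely different route from the paper. The paper's proof is a two-line observation: in Edmonds' algorithm every matched edge is tight at all times (this is a standard invariant of the primal--dual blossom algorithm, see Lawler~\cite{Lawler1976}), i.e.\ $y(u)+y(v)+\sum_{u,v\in Q} z(Q)=w(u,v)$; combining this with Fact~\ref{fac:matchzero}, which says $w(u,v)=0$ for $(u,v)\in\M$, immediately gives the lemma. You are, in effect, re-proving the tightness invariant for matched edges from scratch by tracking the $\delta$-bookkeeping through dual changes, GROW, and BLOSSOM steps---which is why you run into the case analysis you yourself flag as the ``main obstacle.''

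What each approach buys: the paper's argument is clean and short because it leverages a known structural invariant rather than re-deriving it; your approach is more self-contained but at the cost of the delicate blossom bookkeeping. Note also that your exposition in the dual-change case conflates two situations: you first assert one endpoint is $S$ and the other $T$, then in the next breath say both are $S$ because they lie in an $S$-blossom. These are distinct cases (matched edge outside any blossom versus matched edge internal to an outermost $S$-blossom versus the base-to-mate edge straddling a blossom boundary) and should be separated cleanly if you keep this route. Once separated, each case telescopes to zero exactly as you sketch, so the argument goes through---but the paper's one-line appeal to tightness plus Fact~\ref{fac:matchzero} is the intended and much shorter path.
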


\begin{proof}
Since each edge $(u,v)\in \M$ is tight throughout the course of the algorithm,
\begin{IEEEeqnarray}{rCl}
y(u)+y(v)+\sum_{u,v\in Q} z(Q) &=& w(u,v)
\end{IEEEeqnarray}
This, combined with Fact \ref{fac:matchzero} gives the desired result.
\end{proof}

Note that $d_{T}[u]$ is the weight of the shortest co-link path to $u$. This
leads to the following lemma whose proof is in the appendix.

\begin{lemma}
\label{lem:pathbounds}
For each node $u\in T$, 
\[
\displaystyle
d_{T}[u] = y(s)+y(u)+\sum_{u\in Q} z(Q)
\]
\end{lemma}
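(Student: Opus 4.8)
The plan is to prove the identity $d_{T}[u] = y(s) + y(u) + \sum_{u \in Q} z(Q)$ by induction on the phases of Edmonds' algorithm, tracking how both sides change when a node is added to $T$ and how both sides change under dual updates. The key observation is that $d_{T}[u]$ is the weight of the shortest co-link path from $s$ to $u$, which is a specific alternating path, and the right-hand side is exactly the ``reduced-cost accounting'' of such a path that Edmonds' duality provides.

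First I would set up the base case: when a node $u$ is first added to $T$ via the GROW phase along an edge $(w, u)$ with $w \in S$, we have $d_{T}[u] = d_{S}[w] + w(w,u)$ by definition. At that moment $u$ was just moved out of $F$, so before the move the edge $(w,u)$ with $w\in S$, $u\in F$ must have been tight (this is the edge selected by FINDMIN, which by the correctness argument matches the $\delta_2$-edge of Edmonds' algorithm), giving $w(w,u) = y(w) + y(u)$. Also, by induction applied to $w$ — but $w\in S$, so I actually need the companion identity for $S$-nodes. The natural move is to prove simultaneously that for $u \in S$, $d_{S}[u] = y(s) + y(u) + \sum_{u\in Q} z(Q) - 2\,(\text{something})$; more cleanly, since $MATE(u)\in S$ whenever $u\in T$ was just added and $d_{S}[MATE(u)] = d_{T}[u]$, and since $(u, MATE(u)) \in \M$ so by Lemma~\ref{lem:matchtight} $y(u) + y(MATE(u)) + \sum_{u,MATE(u)\in Q} z(Q) = 0$, the two identities for a matched pair in $S\times T$ are equivalent. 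So I would phrase the induction hypothesis as a single statement covering both: for $v\in S$, $d_{S}[v] = y(s) + y(v) + \sum_{v\in Q} z(Q)$, and for $u\in T$, $d_{T}[u] = y(s) + y(u) + \sum_{u\in Q} z(Q)$, and verify both are preserved by GROW (using tightness of the entering edge plus the hypothesis on the $S$-endpoint) and by BLOSSOM (where a $T$-base acquires new blossom memberships but $z$ of the new blossom starts at $0$, so the sum is unchanged, and $d_T$, $d_S$ of newly absorbed nodes are reassigned consistently).

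The part that needs real care is the dual-adjustment step. When Edmonds performs a dual change by $\delta$: every $u\in S$ gets $y(u) \mathrel{+}= \delta$, every $u\in T$ gets $y(u) \mathrel{-}= \delta$, every outermost blossom $Q\in S$ gets $z(Q)\mathrel{-}=2\delta$ and every outermost $Q\in T$ gets $z(Q)\mathrel{+}=2\delta$. I need to check the right-hand side $y(s) + y(u) + \sum_{u\in Q} z(Q)$ is invariant. For $u\in T$: $y(s)$ increases by $\delta$ (since $s\in S$), $y(u)$ decreases by $\delta$, so those cancel; then I must argue $\sum_{u\in Q} z(Q)$ is unchanged, i.e. the net change over all blossoms containing $u$ is zero. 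Here the structure of nested blossoms matters: a node $u\in T$ may lie in several nested blossoms, but only the outermost one gets its $z$ adjusted (inner ones are not ``outermost''), and an outermost blossom containing a $T$-node — is it an $S$-blossom or a $T$-blossom? This is where I expect the main obstacle: I need the fact that the outermost blossom containing a $T$-labeled node is itself $T$-labeled (its base is in $T$), so its $z$ goes up by $2\delta$, which must be compensated — but wait, then it is not compensated by the $y$ terms alone. Let me reconsider: actually for a node $u$ strictly inside a $T$-blossom, $u$ itself might be labeled $S$. The resolution is the standard one: inside a blossom the labels alternate, the identity for the base carries the blossom's $z$-contribution, and for a non-base node $u$ in blossom $Q$ one relates $d_T[u]$ (or $d_S[u]$) to the base's distance plus the weight of the even-length alternating path inside $Q$ from base to $u$ — and that internal path weight is exactly $-(\text{the }z\text{-adjustments})$ by Lemma~\ref{lem:matchtight} applied edge-by-edge along the blossom. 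So the clean approach is: (i) handle the dual change for blossom \emph{bases} and for nodes in no non-trivial blossom directly (the $y$ terms cancel and the relevant $z$'s either don't change or change by amounts that telescope against $2\delta_1$, which is zero in our setting since there are no $T$-blossoms from shrinking — the paper already noted $\delta_1$ is irrelevant, so $z(Q)$ for $Q\in T$ effectively only grows and is balanced inside); (ii) handle non-base nodes inside a blossom by the internal-path argument just sketched, which reduces to Lemma~\ref{lem:matchtight} and the GROW/BLOSSOM bookkeeping of $d_S, d_T$ for absorbed nodes.

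I would organize the write-up as: state the combined induction hypothesis; check it holds initially ($s\in S$, $d_S[s]=0=y(s)+y(s)$ after noting $y(s)$ starts at $0$ and $s$ is in no blossom); then case-split on the three events (GROW, BLOSSOM, dual change), in each case showing both equalities persist, leaning on Fact~\ref{fac:matchzero}, Lemma~\ref{lem:matchtight}, and the already-established correspondence between SHORTEST-CDC's FINDMIN choice and Edmonds' tight-edge/$\delta$ choice. The main obstacle, as noted, is the blossom case — getting the internal alternating-path accounting right so that the $z(Q)$ terms in the identity are correctly produced when a node is absorbed into a blossom, and confirming invariance under the simultaneous $y$/$z$ shifts; everything else is routine substitution.
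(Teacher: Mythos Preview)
Your approach is workable but takes a considerably more involved route than the paper's. The paper does \emph{not} induct on the phases of the algorithm at all; instead it gives a direct, static telescoping computation along the co-link path itself. Specifically, it writes the path weight $W(s,u)=\sum_i w(u_i,u_{i+1})$, uses Fact~\ref{fac:matchzero} to rewrite this as the \emph{alternating} sum (unmatched edges with $+$, matched edges with $-$), substitutes the tightness relation $w(u_i,u_{i+1})=y(u_i)+y(u_{i+1})+\sum_{u_i,u_{i+1}\in Q}z(Q)$ for every edge on the path, and observes that the $y$'s of all internal vertices cancel in pairs, leaving $y(s)+y(u)$; the $z(Q)$ contributions likewise cancel for every blossom the path passes \emph{through}, leaving only the blossoms that contain the endpoint $u$ (and none for $s$, since $s$ can only be a blossom base). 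This is a one-shot computation at a fixed moment, with no need to track GROW, BLOSSOM, or dual updates.

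What this buys the paper is precisely the avoidance of the obstacle you correctly identified: in your dynamic argument you must verify invariance of $y(s)+y(u)+\sum_{u\in Q}z(Q)$ under the dual change $\delta$, which forces you into a case analysis on whether $u$ sits inside a non-trivial outermost blossom and what that blossom's label is, and then into the internal-path accounting you sketched. That can be made to work (it is essentially the approach in some textbook treatments), but your write-up leaves the crucial step --- showing the net change in $\sum_{u\in Q}z(Q)$ is exactly what is needed --- at the level of a sketch, and your reasoning there (``the $y$ terms cancel, then I must argue $\sum z(Q)$ is unchanged'') is not quite right as stated: for a $T$-node inside an $S$-labeled outermost blossom the compensation pattern is different from what you wrote. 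The paper's telescoping argument sidesteps all of this, so if you want the shortest correct proof, switch to that; if you prefer your inductive route, you should tighten the dual-adjustment case rather than defer it.
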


Using the above lemma, we can prove that our algorithm works in a similar manner
to Edmonds' algorithm.

\begin{lemma}
\label{lem:edmonds}
If edge $(u,v)$ becomes tight at some point in Edmonds' algorithm, it becomes
tight at the same point in the algorithm SHORTEST-CDC.
\end{lemma}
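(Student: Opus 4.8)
The plan is to prove Lemma~\ref{lem:edmonds} by induction on the sequence of events (GROW and BLOSSOM steps) in Edmonds' algorithm, maintaining as an invariant that the dual variables $y(\cdot)$ and $z(\cdot)$ implicitly tracked by SHORTEST-CDC agree with those of Edmonds' algorithm, and that the quantities $d_S[u]$ and $d_T[u]$ stored at the nodes encode exactly these duals via Lemma~\ref{lem:pathbounds} and its analogue for $S$-nodes. Concretely, I would first observe that for a node $u \in S$, the weight $d_S[u]$ of the shortest co-link path to $u$ satisfies $d_S[u] = -\bigl(y(s) + y(u) + \sum_{u \in Q} z(Q)\bigr)$ — the sign is flipped relative to the $T$-case because $S$-duals increase while $T$-duals decrease under a dual change — so that the ``slack'' $w(u,v) - (y(u)+y(v)) - \sum_{u,v\in Q} z(Q)$ of a candidate edge $(u,v)$ can be read off from the node-local data as essentially $val[v]$ (up to the shared additive constant $y(s)$ plus the accumulated blossom terms). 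This is the bridge between Edmonds' global dual picture and the local $val[\cdot]/minval[\cdot]$ computation in FINDMIN.

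Next I would verify the base case: initially $s \in S$ with $d_S[s]=0$, everything else is in $F$ with $d_S = d_T = \infty$, which matches Edmonds' initialization ($y \equiv 0$ on the relevant part, no blossoms). For the inductive step, I assume SHORTEST-CDC and Edmonds' algorithm have performed the same sequence of GROW/BLOSSOM operations so far, hence have identical labels, identical $b[\cdot]$, and (by the invariant) identical duals. Now consider the next edge Edmonds makes tight. Because Edmonds chooses $\delta = \min(\delta_2,\delta_3)$ — only $\delta_2$ and $\delta_3$ matter, as the excerpt notes there are no shrunk $T$-blossoms — the first edge to become tight is exactly the one minimizing the slack over $u\in S$, $v \in S \cup F$, with the factor-of-$2$ normalization on the $v\in S$ case. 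That minimization is precisely what FINDMIN computes: Step~1 computes the local minima $minval[u]$, Step~2 discards the spurious $v\in S$ candidates with $b[v]=b[u]$ (which correspond to edges inside a blossom that Edmonds would never act on, and which carry $z(Q)$ terms that make them non-tight), and Step~3 aggregates up the tree to $s$ to select the global minimizer. I would then argue that after the dual change of magnitude $\delta$, this selected edge $(u,v)$ is tight in Edmonds' algorithm, and SHORTEST-CDC selects the same $(u,v)$; whether this triggers GROW (if $v\in F$) or BLOSSOM (if $v\in S$) is determined by the label of $v$, which is the same in both, so the same operation is performed and the invariant is restored — for GROW via the explicit assignments to $d_T[v], d_S[MATE(v)]$, and for BLOSSOM via the base $b$ redistributing the updated $d_s/d_t$ values through $Q$.

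The main obstacle, and where I would spend the most care, is the bookkeeping around blossoms: showing that the $val$ formula remains the correct proxy for edge slack once a node lies inside one or several nested blossoms, i.e.\ that the $\sum_{u,v\in Q} z(Q)$ terms are consistently absorbed into the stored $d_S, d_T$ values when the BLOSSOM phase has the base $b$ reassign distances and parents to the newly-absorbed nodes. In particular I must check that for an edge $(u,v)$ with both endpoints in the same outermost blossom, the stored data correctly reflects that its slack is strictly positive (so it is rightly marked examined in Step~2 and never chosen), and that for an edge leaving a blossom the accumulated $z(Q)$ of every blossom containing an endpoint is accounted for exactly once. A secondary, more routine point is confirming that the message-passing in Step~3 and in the BLOSSOM phase (following Shieber--Moran) genuinely computes the stated minima and base correctly; I would treat the correctness of that distributed aggregation as given by the cited constructions and focus the proof on the dual-variable/slack correspondence. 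The other lemmas needed — Lemma~\ref{lem:matchtight} and Lemma~\ref{lem:pathbounds} — are already available and plug directly into the invariant.
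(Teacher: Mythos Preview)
Your approach is essentially the paper's: reduce the claim to showing that Edmonds' slack-minimization $\delta=\min(\delta_2,\delta_3)$ selects the same edge as FINDMIN's $val$-minimization, by expressing the slacks in terms of the stored distances via Lemmas~\ref{lem:matchtight} and~\ref{lem:pathbounds}. The paper carries this out as a direct calculation---it adds $y(s)$ to each candidate $\delta_i$ and simplifies, recovering exactly the two branches of the $val[\cdot]$ formula---rather than as an explicit induction on events. Your inductive wrapper is fine, just more elaborate; the invariant you want is already packaged in Lemma~\ref{lem:pathbounds}.

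One concrete error: your formula $d_S[u]=-\bigl(y(s)+y(u)+\sum_{u\in Q}z(Q)\bigr)$ is wrong. Since $d_S[u]=d_T[u']$ for $u'=MATE(u)$, Lemma~\ref{lem:pathbounds} gives $d_S[u]=y(s)+y(u')+\sum_{u'\in Q}z(Q)$, and combining with Lemma~\ref{lem:matchtight} (noting that every blossom containing the $T$-vertex $u'$ also contains its mate $u$, because blossom bases are always $S$-vertices) yields simply $d_S[u]=y(s)-y(u)$. The sign on $y(s)$ is positive; only the sign on $y(u)$ flips, consistent with both $y(s)$ and $y(u)$ increasing by $\delta$ while the path weight $d_S[u]$ stays fixed. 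With this correction the slack-to-$val$ correspondence goes through exactly as in the paper. Your worry about the blossom $z(Q)$ bookkeeping is legitimate but is resolved by this same observation: once the blossom sums cancel out of $d_S[u]$, the $val$ formula is the slack plus the global offset $y(s)$, and no further per-blossom accounting is needed.
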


\begin{proof}
At some intermediate step of Edmonds' algorithm, let some value $\delta$ be
chosen as the dual change for that step. Let $y(u)$ be the dual variables of any
node $u$ before the dual change of $\delta$. Let $u'=MATE(u)$. Now, according to
Lemma \ref{lem:pathbounds},
\begin{IEEEeqnarray}{rCl}
\displaystyle
y(s) + y(u') + \sum_{u'\in Q} z(Q) & = & d_{T}[u'] \label{eqn:tightmate}
\end{IEEEeqnarray}
The value of $\delta$ is given by $\min(\delta_{1},\delta_{2})$ where
$\delta_{1}$ and $\delta_{2}$ are given by Equations \ref{eqn:del2} and
\ref{eqn:del1}. This is because, at the start of the algorithm, there are no
blossoms. Hence, no non-trivial $t$-blossoms are found and Equation
\ref{eqn:del3} never occurs. We may compute $\delta =
y(s)+\min(\delta_{1},\delta_{2}) = \min(y(s)+\delta_{1},y(s)+\delta_{2})$.

Taking the case of $\delta_{1}$, we have,
\begin{IEEEeqnarray}{rCl}
\displaystyle
\delta_{1} & = & \min_{u,v\in S} y(s) + \frac{w(u,v)-(y(u)+y(v))}{2} \\
& = & \min_{u,v\in S} \frac{w(u,v)+(y(s)-y(u))+(y(s)-y(v))}{2} \\
& = & \min_{u,v\in S} \frac{w(u,v)+d_{T}[u']+d_{T}[v']}{2} \label{eqn:delta1}
\end{IEEEeqnarray}
The above equations follow from Equation \ref{eqn:tightmate} and Lemma
\ref{lem:matchtight}. In the case of $\delta_{2}$, we may similarly compute it
as 
\begin{IEEEeqnarray}{rCl}
\delta_{2} & = & \min_{u\in S, v\in F} w(u,v)-(y(u)+y(v)) \\
& = & \min_{u\in S, v\in F} w(u,v)-y(u) \\
& = & \min_{u\in S, v\in F} w(u,v)+y(s)-y(u) \\
& = & \min_{u\in S, v\in F} d_{T}[u]+w(u,v) \label{eqn:delta2}
\end{IEEEeqnarray}
Since $d_{T}[u'] = d_{S}[u]$, Equations \ref{eqn:delta1} and \ref{eqn:delta2}
are exactly the values computed by the algorithm SHORTEST-CDC during the FINDMIN
phase. Hence, the lemma is proved.
\end{proof}

\begin{thm}
\label{thm:main}
The algorithm SHORTEST-CDC finds the shortest CDC path from $s$ to $t$ using
$O(n^{2})$ fixed-size messages.
\end{thm}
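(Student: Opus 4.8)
The plan is to establish Theorem~\ref{thm:main} in two parts: correctness (that SHORTEST-CDC returns the shortest CDC path) and message complexity (that only $O(n^2)$ fixed-size messages are exchanged). For correctness, I would argue inductively on the iterations of the algorithm that SHORTEST-CDC faithfully simulates each phase of Edmonds' minimum-cost perfect-matching algorithm run on the ES-expanded graph with the initial zero-weight matching $\M$. The base case is immediate: initially $s\in S$, all other nodes are in $F$, and by Fact~\ref{fac:matchzero} the internal matching has zero weight, matching the initial configuration of Edmonds' algorithm. For the inductive step, I would invoke Lemma~\ref{lem:edmonds}: assuming the two executions agree up to iteration $i$, the edge $(w,x)$ that becomes tight in Edmonds' algorithm at iteration $i+1$ is exactly the edge for which FINDMIN attains its minimum, so the same BLOSSOM or GROW action is taken, and the dual variables / distance labels remain in the correspondence given by Lemmas~\ref{lem:matchtight} and~\ref{lem:pathbounds}. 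When $d$ leaves $F$, the alternating tree rooted at $s$ contains a minimum-weight co-link path to $d$; by the ES-expansion equivalence reviewed in Section~\ref{sec:prob-form}, reading off the external edges of this path yields the shortest CDC path from $s$ to $d$, together with its channel assignment.

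For the message complexity, I would bound the work per iteration and multiply by the $O(n)$ iterations. Each iteration runs FINDMIN, and possibly BLOSSOM or GROW. In FINDMIN, Step~3 is a convergecast to $s$ followed by a single broadcast-style notification to the two endpoints $w$ and $x$; a convergecast over the alternating tree costs $O(n)$ messages. Step~2 may repeat Steps~1--3 several times to skip over same-blossom candidates, but each repetition permanently marks a neighbor pair as ``examined'', and since there are $O(n)$ vertices in the expanded graph adjacent to any fixed node's subtree, the total number of such skips charged across the whole algorithm is bounded; I would argue the per-iteration cost of these re-scans is amortized $O(n)$. BLOSSOM sends messages from $u$ and $v$ toward $s$ to locate the base $b$ (following Shieber--Moran, $O(n)$ messages) and then $b$ broadcasts membership through the new blossom ($O(n)$ messages). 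GROW sends $O(1)$ messages to set labels plus one notification from $MATE(v)$ to its $S$-neighbors, again $O(n)$. Hence each iteration costs $O(n)$ messages and there are $O(n)$ iterations, giving $O(n^2)$ in total; each message carries only a constant number of node IDs and one distance value, so it is of fixed size.

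The main obstacle I anticipate is the amortized accounting in Step~2 of FINDMIN: a naive bound would charge $O(n)$ messages per skipped candidate and $O(n)$ candidates per iteration, yielding $O(n^3)$. The key observation I would use to avoid this is that the ``examined'' marks are monotone — once $u$ marks $v$ as examined it never unmarks it — so the number of times the $\langle v,w,minval\rangle$ convergecast must be restarted across the entire execution is $O(n^2)$ summed over all nodes, not per iteration; but since a convergecast can also be piggybacked so that a single pass reports the best \emph{not-yet-examined} candidate in each subtree, one actually needs only $O(n)$ messages per iteration to reach the first genuinely new $(w,x)$. I would make this precise by having each node report, in one convergecast pass, the minimum $val$ over unexamined neighbors in its subtree, and observe that $s$ then either gets a usable edge or learns that a same-blossom collision occurred at exactly one place, which is resolved with $O(1)$ extra messages before the next pass. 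Care is also needed to confirm that the distance updates in GROW and BLOSSOM keep $d_T[u]$ equal to the shortest co-link path weight as asserted before Lemma~\ref{lem:pathbounds}, but this is exactly what Lemma~\ref{lem:pathbounds} and the inductive correspondence with Edmonds' duals deliver, so no independent argument is required.
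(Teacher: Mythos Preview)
Your proposal is correct and matches the paper's argument in substance: correctness is reduced to Lemma~\ref{lem:edmonds} and the ES-expansion equivalence, and the message bound is obtained phase by phase using the Shieber--Moran bound for BLOSSOM. The one place you take a longer road is the FINDMIN accounting: you frame it as ``$O(n)$ per iteration times $O(n)$ iterations'' and then have to work around the Step~2 re-scans with an amortization argument. The paper skips this detour entirely by observing globally that the ``examined'' marks are permanent, so over the whole execution each edge is scanned at most twice, giving $O(n^2)$ for FINDMIN directly---which is exactly the monotonicity observation you eventually invoke, just applied at the outset rather than as a repair. Likewise the paper counts GROW as $O(1)$ per phase (hence $O(n)$ total), folding the neighbor-notification into the same edge-scan budget rather than calling it $O(n)$ per phase.
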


\begin{proof}
From Lemma \ref{lem:edmonds}, it is clear that the steps of Algorithm
SHORTEST-CDC follow the steps of Edmonds' algorithm exactly. Hence, we indeed
find the shortest augmenting path from $s$ to $t$. This implies that we have
found the shortest CDC path from $s$ to $t$.

We now analyze the communication complexity of Algorithm SHORTEST-CDC. There are
three main phases to the algorithm: (i) The FINDMIN phase. Here each node scans
its adjacent edges at most once through the algorithm. Hence, each edge is
scanned at most twice leading to a message complexity of $O(n^{2})$. (ii) The
GROW phase. Clearly, at each GROW phase, $O(1)$ messages are sent leading to a
total message complexity of $O(n)$. (iii) The BLOSSOM phase. Using the approach
of \cite{Shieber1986}, for each edge, we can achieve the backward and forward
processes using $O(n)$ messages. There can be no more than $O(n)$ blossoms found
during the course of the algorithm leading to a complexity of $O(n^{2})$.
Note that, although the approach of \cite{Shieber1986} is for unweighted
graphs, since we do not have blossom expansions, their method can be used
exactly since the rest of the blossom discovery is identical to unweighted
graphs \cite{Lawler1976}. Each message sent is of fixed size. Hence, the total
number of fixed-size messages sent during the algorithm is $O(n^{2})$.
\end{proof}

It is worth noting that if the graph is unweighted or if the edge weights are
integers, then we have several advantages. The complexity of FINDMIN is reduced
since there is no need to send messages back to $s$. The blossom discovery
process becomes simpler by discovering all blossom edges at the same time. We
can find a path using $O(n\frac{\pi}{R}+n\log{n})$ messages where $\pi$ is the
length of the shortest $(s-t)$ CDC path.

\section{Spanner for CDC Routing\label{spanner}}

In this section, we discuss a spanner for Channel-Discontinuity-Constraint
routing in wireless networks. We are given a graph $\G ( \V, \L)$ where $\V$ is
the set of nodes in $\Re^{2}$ and $\L$ is the set of links between nodes, and
the set of channels $C(u)$ available at each node $u$. Let $|u-v|$ denote the
euclidean distance between $u$ and $v$. The weight of every link $(u,v)$ is
given to be $|u-v|$. The cost of a path $\P$ is the sum of the weights of the
links along $P$ and is denoted by $|P|$.

\begin{defn}
Let $d_{\G}(u,v)$ denote the cost of the shortest CDC-path from $u$ to $v$ in
$\G$. If there is no such path, then the cost is infinity. We say that a graph
$\G'(\V, \L')$ where $\L'\subseteq \L$ is a \textbf{CDC $t$-Spanner} of $\V$
\emph{if and only if} for every $u,v \in \V$, $d_{\G'}(u,v)\leq t\cdot
d_{\G}(u,v)$ where $t$ is a constant.
\end{defn}

\begin{defn}
Let $\T=\{T_{1},...,T_{p}\}$ denote a partition of nodes in $\G$ into maximally
disjoint sets such that $\C(u)=\C(v)$ for every $u,v\in T_{i}$. We define the
\textbf{type} of a node $u$, denoted by $T(u)$, to be the set $T_{i}$ containing
$u$. If $C$ is the number of channels available in the network, then, $p\leq
2^{C}$.
\end{defn}

The above definition of CDC spanners is based on the definition of spanners for
Unit Disk Graphs in  \cite{wang-geometricspanner} but is equally relevant in the
case of networks using directional antennas. The CDC t-Spanner
$CDCYG_{k}(\V,\L',\C)$, which is based on the Yao graph~\cite{Yao}, is created
as follows. The set of links in $CDCYG_{k}$ is obtained in the following manner.
For each node $v\in \V$, divide the region around $v$ into $k$ interior-disjoint
sectors centered at $v$ with opening angle $\theta$. In each sector, connect $v$
to its two nearest neighbors of each type $T\in \T$ and connect these neighbors
with a link. This process can be performed in a distributed manner as described later. 
We prove that this graph is a CDC $t$-Spanner for the graph $\G$ where
$t=(1-2\sin{\tfrac{\theta}{2}})^{-2}$. This is particularly applicable when
directional antennas are used. Also, the computation of the above spanner does
not require that we know in advance the specific source $s$ and destination $d$
between which a CDC path needs to be computed.

\piccaption{$s$ and $d$ are located in clusters $A$ and $B$ respectively. The
only path from $s$ to $d$ is the edge $(s,d)$ which is not present in
$CDCYG_{k}$.\label{fig:oneedge}}
\parpic(2in,1in)[r][l]{
\includegraphics[width=2in]{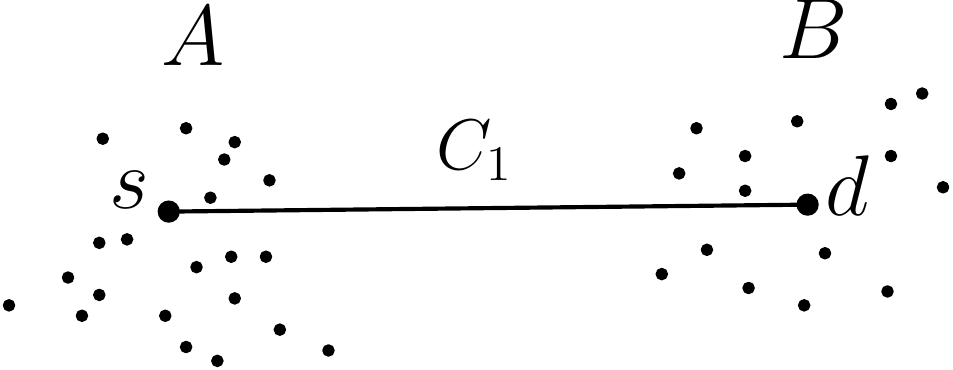}
} 
\begin{rem}
\label{rem:oneedge}

We note that no sub-quadratic size spanner can accommodate the degenerate case
where the the shortest CDC path between two nodes $s$ and $d$ is only one edge.
For example, in the case where $s$ and $d$ are in two clusters $A$ and $B$ and
there is only one channel shared between nodes in $A$ and nodes in $B$ (see
Figure \ref{fig:oneedge}), all edges from $A$ to $B$ may be required. However,
no routing algorithm is needed here since $s$ can check to see if $d$ is within
range and transmit to it.  When not considering the degenerate case, we obtain
the following theorem.
\end{rem}

\begin{thm}
\label{thm:sdpath}
$CDCYG_{k}$ is a CDC $t$-Spanner of $\G$ and $|\L'|=O(|V|)$ where
$t=(1-2\sin{\tfrac{\theta}{2}})^{-2}$.
\end{thm}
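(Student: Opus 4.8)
The plan is to prove the two assertions separately: first the easy bound $|\L'| = O(|V|)$, then the spanner property. For the size bound, I would observe that each node $v$ has $k$ sectors, and in each sector it keeps at most two nearest neighbors \emph{per type} $T \in \T$, plus the link joining those two neighbors. Since $|\T| = p \le 2^C$ is a constant (the number of channels $C$ being a fixed network parameter), node $v$ contributes $O(k \cdot p) = O(1)$ links, and summing over all $v$ gives $|\L'| = O(|V|)$. The only subtlety is that some sectors may contain fewer than two neighbors of a given type, but that only decreases the count.

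The main work is the spanner property: for every $u,v \in \V$ with a finite-cost CDC-path in $\G$, there is a CDC-path in $\G'$ of cost at most $t \cdot d_{\G}(u,v)$ with $t = (1-2\sin\tfrac{\theta}{2})^{-2}$. I would proceed edge-by-edge along a shortest CDC-path $\pi = \langle u = w_0, w_1, \dots, w_m = v\rangle$ in $\G$ (excluding the degenerate one-edge case of Remark \ref{rem:oneedge}), replacing each link $(w_{i-1}, w_i)$ by a short CDC-path in $\G'$ between the same endpoints, \emph{carrying the same channel} that $\pi$ used on that link. The core geometric lemma, adapted from the standard Yao-graph argument in \cite{wang-geometricspanner}, is: if $(a,b)$ is a link of $\G$ and $b$ has type $T(b)$, then in the sector of $a$ containing $b$ there is a neighbor $a'$ of type $T(b)$ with $|a - a'| \le |a-b|$, $|a' - b| \le (2\sin\tfrac{\theta}{2})\,|a-b|$, and crucially $a'$ can be chosen so that $(a,a')$ can carry any prescribed channel available on $(a,b)$ — this is where keeping the \emph{two} nearest neighbors of each type matters, since one of the two is guaranteed usable even if the nearest one is forced to avoid the channel used on the preceding link $(a, \text{prev})$. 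Iterating this substitution, the standard Yao analysis shows the detour between $w_{i-1}$ and $w_i$ stays within the disk and its total length is bounded by a geometric series summing to $(1-2\sin\tfrac{\theta}{2})^{-1}|w_{i-1}-w_i|$; applying the argument again (the square in the exponent) to handle the recursion on both endpoints yields the factor $(1-2\sin\tfrac{\theta}{2})^{-2}$. Concatenating over all $i$ and checking that consecutive substituted sub-paths meet the CDC constraint at the junction nodes $w_i$ (which holds because $\pi$ itself alternates channels there and we preserved those channels) gives a CDC-path in $\G'$ of the required length.

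I expect the main obstacle to be the channel bookkeeping rather than the Euclidean geometry: one must verify that the two-nearest-neighbors-per-type rule always leaves enough freedom to route a detour that (i) uses only channels in $\C(w_{i-1}) \cap \C(w_i)$ and (ii) does not collide with the channel used on the adjacent links of $\pi$ at the meeting points $w_i$. The reason two neighbors suffice is a simple pigeonhole: a detour step is constrained by at most one forbidden channel (the one just used), so among two distinct nearest neighbors of the correct type at least one admits a legal channel. I would state this carefully as a lemma before the edge-replacement induction, then the rest is the routine Yao-graph telescoping. A secondary point to handle cleanly is convergence of the recursion — i.e. that $2\sin\tfrac{\theta}{2} < 1$, which holds for $\theta < \pi/3$, equivalently $k \ge 7$ sectors — and I would state this hypothesis explicitly.
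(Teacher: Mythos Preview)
Your high-level strategy---replace each link of a shortest CDC-path by a Yao-style detour and sum---matches the paper, but two of your key mechanisms are off, and together they constitute a real gap.

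First, your pigeonhole justification for keeping \emph{two} nearest neighbors per type is incorrect: both neighbors have the \emph{same} type, hence identical channel sets, so neither is ``more usable'' than the other with respect to a forbidden channel. The paper's reason is structural. From an even-indexed node $v_j$ it takes \emph{two} steps inside the same sector of $v_j$: $v_{j+1}$ is the nearest neighbor (of type $T(u_i)$ or $T(u_{i+1})$) and $v_{j+2}$ is the second-nearest of that same type; the spanner explicitly contains the edge $(v_{j+1},v_{j+2})$ linking them. The pair $(v_j,v_{j+1}),(v_{j+1},v_{j+2})$ is then assigned channels $C_2$ and $C_1$ (or $C_3$), where $C_1,C_2,C_3$ are the channels on the three consecutive original links around $(u_i,u_{i+1})$. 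This is what guarantees CDC both inside each replacement and at the junctions $u_i$, and it is precisely this two-level zigzag that yields the \emph{squared} stretch: one application of Lemma~\ref{lem:boundpqr} telescopes the even subsequence $v_0,v_2,\dots$ to within $(1-2\sin\tfrac{\theta}{2})^{-1}|u_i-u_{i+1}|$, and a second application bounds each two-edge detour $(v_j,v_{j+1},v_{j+2})$ by $(1-2\sin\tfrac{\theta}{2})^{-1}|v_j-v_{j+2}|$. Your phrase ``recursion on both endpoints'' does not describe this, and your single-nearest-neighbor iteration would give only the unsquared factor while leaving the CDC constraint at the junction nodes unresolved when $|\C(u_{i+1})|=2$.

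Second, you omit the untangling step. The concatenation of the replacement subpaths $\P'_{u_i,u_{i+1}}$ need not be simple---different subpaths may share intermediate nodes---and a non-simple walk is not a CDC-path. The paper handles this with a separate lemma (Lemma~\ref{lem:untangling}) that shortcuts each overlap while preserving CDC, via a case analysis on the channel $C_{out}$ of the outgoing edge at the overlap node and, in one case, rerouting through the second-nearest neighbor $v_{j+1}$ with a triangle-inequality argument to keep the length from increasing. This is not automatic and your proposal does not address it.
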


The following lemma was proven by Ruppert and Seidel~\cite{ruppert-yao} as a
part of their proof of the stretch factor of the Yao Graph.

\begin{lemma}[From \cite{ruppert-yao}]
\label{lem:boundpqr}
If there are two nodes $q$ and $r$ in a sector whose apex is $p$ and $|p-q|\leq
|p-r|$, then $|q-r|\leq |p-r|-(1-2\sin(\tfrac{\theta}{2}))|p-q|$.
\end{lemma}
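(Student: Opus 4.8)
The plan is to prove this as an elementary triangle-inequality estimate, introducing a single auxiliary point on the ray from the apex through the farther node. First I would fix notation: write $a = |p-q|$ and $b = |p-r|$, so that the hypothesis reads $a \le b$, and let $\alpha = \angle qpr$ denote the angle at the apex subtended by $q$ and $r$. Since both $q$ and $r$ lie in one sector of opening angle $\theta$, we have $0 \le \alpha \le \theta$. The target inequality, after substituting these symbols, reads $|q-r| \le b - (1 - 2\sin(\tfrac{\theta}{2}))\,a$, i.e. $|q-r| \le (b - a) + 2\sin(\tfrac{\theta}{2})\,a$, which is the form I would aim to hit.

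The key construction is the point $q'$ lying on the segment $pr$ at distance $a$ from $p$. This point is well-defined and actually lands on the segment itself (not its extension beyond $r$) precisely because $a \le b$. I would then bound $|q-r|$ by routing through $q'$ via the triangle inequality, $|q-r| \le |q-q'| + |q'-r|$. The second term is immediate: since $q'$ sits on $pr$ at distance $a$ from $p$, we get $|q'-r| = b - a$. For the first term, the triangle $pqq'$ is isosceles with $|p-q| = |p-q'| = a$ and apex angle $\alpha$, so the law of cosines gives $|q-q'|^2 = 2a^2(1-\cos\alpha) = 4a^2\sin^2(\tfrac{\alpha}{2})$, hence $|q-q'| = 2a\sin(\tfrac{\alpha}{2})$.

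Combining these with the monotonicity bound $\sin(\tfrac{\alpha}{2}) \le \sin(\tfrac{\theta}{2})$ — valid because $\alpha \le \theta$ and sine is increasing on $[0,\tfrac{\pi}{2}]$, a range that contains $\tfrac{\theta}{2}$ since the stretch factor $(1-2\sin(\tfrac{\theta}{2}))^{-2}$ is only meaningful when $\sin(\tfrac{\theta}{2}) < \tfrac12$, i.e. $\theta < \tfrac{\pi}{3}$ — yields
\[
|q-r| \;\le\; 2a\sin(\tfrac{\theta}{2}) + (b-a) \;=\; b - \bigl(1 - 2\sin(\tfrac{\theta}{2})\bigr)\,a,
\]
which is exactly the claimed inequality after resubstituting $a = |p-q|$ and $b = |p-r|$.

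The proof is short, so I expect no serious obstacle; the only points demanding care are (i) verifying that $q'$ falls on the segment $pr$ rather than past $r$, which is exactly where the hypothesis $|p-q| \le |p-r|$ is used and must be invoked explicitly, and (ii) justifying the step $\sin(\tfrac{\alpha}{2}) \le \sin(\tfrac{\theta}{2})$, which relies on $\tfrac{\theta}{2}$ remaining in the first quadrant. Both are routine once the auxiliary point $q'$ is in place, so the entire argument reduces to one well-chosen construction followed by the triangle inequality.
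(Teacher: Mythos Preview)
Your argument is correct and is precisely the standard Ruppert--Seidel proof: place an auxiliary point $q'$ on segment $pr$ at distance $|p-q|$ from $p$, apply the triangle inequality $|q-r|\le |q-q'|+|q'-r|$, and compute the two pieces as $2|p-q|\sin(\alpha/2)$ and $|p-r|-|p-q|$ respectively. The paper itself does not give a proof of this lemma at all---it simply imports the statement from \cite{ruppert-yao}---so there is nothing further to compare; your write-up would serve as a self-contained replacement for that citation.
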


\piccaption{Nodes $q$ and $r$ are in a sector of $p$ and
\\$|p-q|<|p-r|$.\label{fig:yaolem}}
\parpic(1.25in,0.9in)[r][l]{
\includegraphics[width=0.9in]{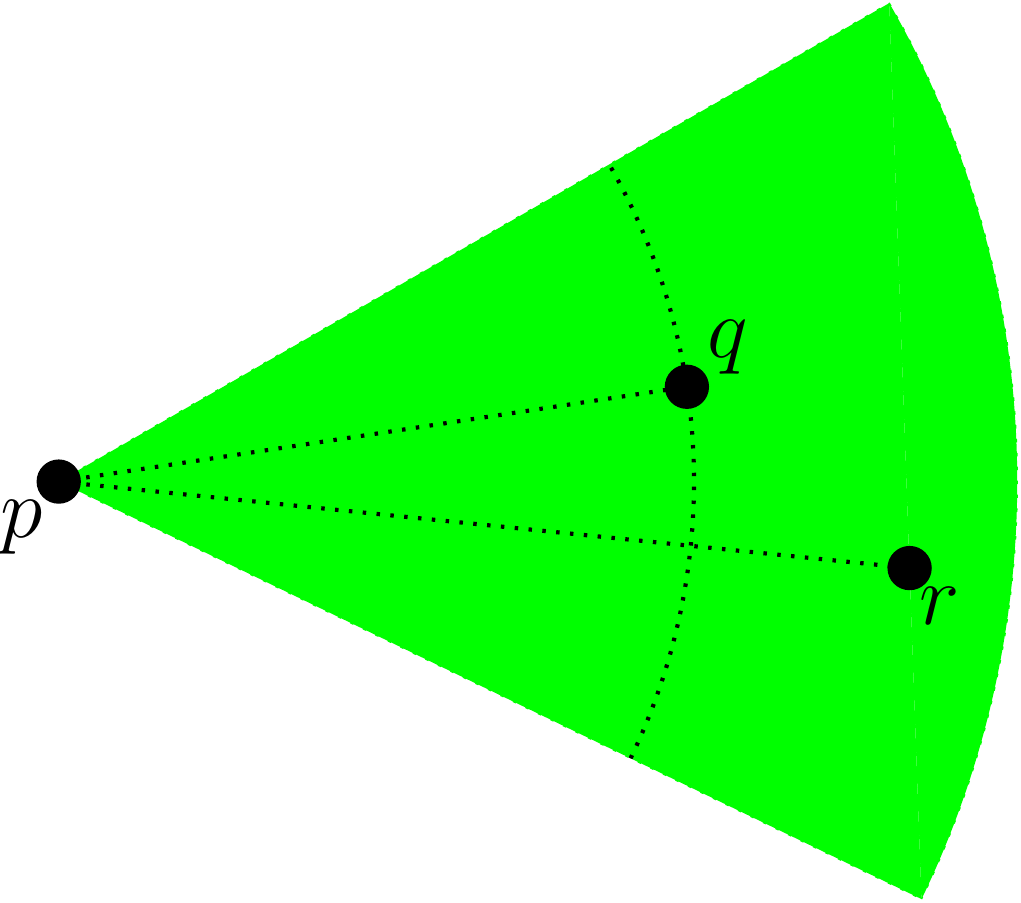}
}

\def\Pii{\P'_{{v_i} v_{i+1}}}

Let $\P:\{u_{0},u_{1},...,u_{m-1},u_{m}\}$ be an $(s,d)$ CDC-path in $\G$ where
$s=u_{0}$ and $d=u_{m}$. In the following lemmas, we assume that $m\geq 2$. The
case where $m=1$ is discussed in Remark~\ref{rem:oneedge} above. We describe a
procedure that shows that existence of an $(s,d)$ CDC-path $\P''$ in $CDCYG_{k}$
such that $|\P''|\leq t\cdot |\P|$. Note that this is not the algorithm used it
practice but is described just to show the existence of such a path. From now
on, for every path we create we also assign channels to its links.

The algorithm has two major components:

\noindent\textbf{\emph{Link Replacement:}} Each link $(u_{i}, u_{i+1})$ in $\P$
is replaced by a path. The resulting path $\P'$, obtained by replacing all the
links in $\P$, may not be simple. However, $|\P'|\leq t\cdot |\P|$.

\noindent\textbf{\emph{Untangling:}} A path $\P''$ is obtained from $\P'$ such
that $|\P''|\leq |\P'|$ and $\P''$ is simple.

We divide the links in $\P$ into three cases - (i) the intermediate links
$(u_{i},u_{i+1})$ for $i=1\dots m-2$, (ii) the first link $(s,u_{1})$ and (iii)
the last link $(u_{m-1},d)$.

Now, consider the first case where link $(u_{i},u_{i+1})\in \P$ is not the first
or last link. We replace this link with a path
$\P'_{u_{i},u_{i+1}}:\{u_{i}=v_{0},v_{1},...,v_{r-1},v_{r}=u_{i+1}\}$ in the
following manner. If, for even $j$, $(v_{j},u_{i+1})$ is a link in $\G$, the
next node is $u_{i+1}$. Otherwise, the next vertex $v_{j+1}$ of $\P$ is the
nearest neighbor of type either $T(u_{i})$ or $T(u_{i+1})$ in the sector $\Psi$
of $v_{j}$ where $\Psi$ is the sector containing $u_{i+1}$ . From $v_{j+1}$,
connect to the next nearest neighbor $v_{j+2}$ to $v_{j}$ of type $T(v_{j+1})$
the sector $\Psi$. Repeat for $v_{j+2}$. This process is demonstrated in Figure
\ref{fig:yaolinkspanner}. Let $C_{1},C_{2},C_{3}$ denote the channel assignment
in $\P$ to the links $(u_{i-1},u_{i})$, $(u_{i},u_{i+1})$ and
$(u_{i+1},u_{i+2})$ respectively. We may assign channels satisfying CDC to
$\P'_{u_{i},u_{i+1}}$ as follows (i) $C_{2}$ to $(v_{j},v_{j+1})$ and $C_{1}$ or
$C_{3}$ to $(v_{j+1},v_{j+2})$ for $j=0...r-3$ and (ii) $C_{2}$ to
$(v_{r-1},u_{i+1})$.

Now, for the second case, if $s$ has exactly one channel available to it, we may
only construct the above path through nodes of type $T(u_{1})$. In the third
case, if $d$ has exactly one channel available to it, then a path from $u_{m-1}$
to $d$ may not exist. However, we may construct a path from $d$ to $u_{m-1}$
through nodes of type $T(u_{m-1})$.

Let $\P'$ be the path obtained by concatenating $\P'_{u_{i+1},u_{i+2}}$ after 
$\P'_{u_{i},u_{i+1}}$ ($i=1\dots m-1$). We prove, in Lemma \ref{lem:pathexists},
that $\P'$ always exists. We also prove, in Lemma \ref{lem:linkspanner}, that
$|\P'|\leq t\cdot |\P|$ and we can assign channels to the links in $\P'$
satisfying CDC. However, $\P'$ might not be simple, since when constructing
$\P'_{u_{i},u_{i+1}}$, no caution is given for not using nodes of
$\P'_{u_{j},u_{j+1}}$ where $j\neq i$. This problem is resolved in Lemma
\ref{lem:untangling}, which shows how to create an $(s,d)$ CDC-path from $\P'$,
without increasing its length. The proofs of these lemmas are in the
appendix.

\begin{figure}[t]
\centerline{
\includegraphics[width=0.9\columnwidth]{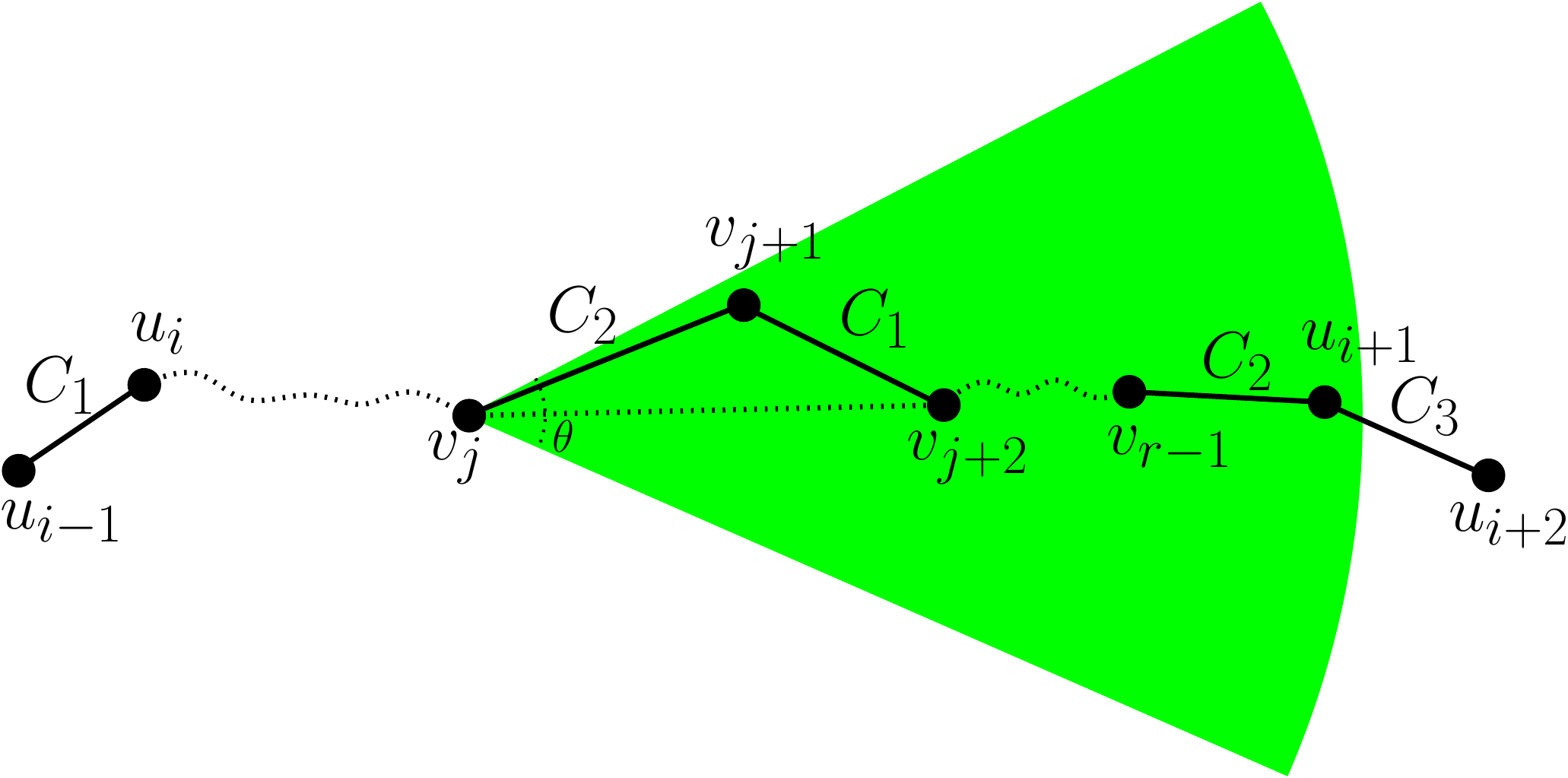}
}
\caption{A CDC-path from $u_{i}$ to $u_{i+1}$ constructed by repeatedly adding
the links $(v_{j},v_{j+1})$ and $(v_{j+1},v_{j+2})$ from $v_{j}$ until $u_{i+1}$
is a direct neighbor. $v_{j+1}$ and $v_{j+2}$ are the two nearest neighbors of
$v_{j}$ in the sector of $v_{j}$ containing $u_{i+1}$ whose set of channels is
the same as either $u_{i}$ or $u_{i+1}$.}
\label{fig:yaolinkspanner}
\end{figure}

\begin{lemma}
\label{lem:pathexists}
If there is a path $\P$ from $s$ to $d$ in $\G$, then the path $\P'$ from $s$ to
$d$ in $CDCYG_{k}$ constructed by the Link Replacement Procedure exists.
\end{lemma}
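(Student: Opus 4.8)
The plan is to reduce the statement to the successful, non-stalling execution of the Link Replacement Procedure on each single link of $\P$. Since $\P'$ is the concatenation of the segments $\P'_{u_i,u_{i+1}}$ and these segments share the endpoints $u_1,\dots,u_{m-1}$, once each segment is shown to exist as a walk in $CDCYG_k$ their concatenation is automatically a walk from $s=u_0$ to $d=u_m$ in $CDCYG_k$, which is all Lemma~\ref{lem:pathexists} asks (simplicity is deferred to Lemma~\ref{lem:untangling}). So the real work is to prove that, for each link $(u_i,u_{i+1})$ of $\P$, the iterative construction of $\P'_{u_i,u_{i+1}}$ neither gets stuck nor runs forever. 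I would treat the intermediate links and the two boundary links $(s,u_1)$, $(u_{m-1},d)$ by the same argument, the latter with the admissible types appropriately restricted and, for $(u_{m-1},d)$, the construction run from $d$ toward $u_{m-1}$.

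Fix an intermediate link, write $q=u_{i+1}$, and induct on the construction steps carrying the invariant that the current ``on-line'' vertex $v_j$ (i) has been reached from $u_i$ along edges of $CDCYG_k$, (ii) sees $q$ in a well-defined sector $\Psi$, and (iii) satisfies $|v_j-q|\le|u_i-u_{i+1}|\le R$, so $q$ is within transmission range of $v_j$. The base case $v_0=u_i$ holds because $(u_i,u_{i+1})$ is a link. For the step: if $q$ is a stored nearest neighbour of $v_j$ of type $T(u_{i+1})$ in $\Psi$ then $(v_j,q)\in CDCYG_k$, the segment closes, and we are done. Otherwise $q$ is a node of type $T(u_{i+1})$ lying in $\Psi$ within range of $v_j$, so the set of nodes of type $T(u_i)$ or $T(u_{i+1})$ in $\Psi$ within range of $v_j$ is nonempty and, in the non-terminating case, contains at least two nodes nearer to $v_j$ than $q$; these are exactly the $v_{j+1},v_{j+2}$ selected, and by the construction of the Yao-type graph both $(v_j,v_{j+1})$ and $(v_{j+1},v_{j+2})$ are edges of $CDCYG_k$. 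Applying Lemma~\ref{lem:boundpqr} with apex $v_j$ to the pair $(v_{j+2},q)$ (valid since $v_{j+2}$ and $q$ lie in $\Psi$ and $|v_j-v_{j+2}|\le|v_j-q|$) yields $|v_{j+2}-q|\le|v_j-q|-(1-2\sin\tfrac{\theta}{2})\,|v_j-v_{j+2}|<|v_j-q|$, the strict inequality using the regime $2\sin\tfrac{\theta}{2}<1$ in which $t=(1-2\sin\tfrac{\theta}{2})^{-2}$ is finite. Hence the invariant passes to $v_{j+2}$; since the distance to $q$ strictly decreases along $v_0,v_2,v_4,\dots$ and $\V$ is finite, the construction cannot revisit a vertex and must halt, and it can only halt by reaching $q$. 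The channels assigned along $\P'_{u_i,u_{i+1}}$ (alternating the colour of $(u_i,u_{i+1})$ with a colour of an adjacent link of $\P$) then satisfy CDC. For $(s,u_1)$ with $s$ single-channel, restrict the admissible type to $T(u_1)$; for $(u_{m-1},d)$ with $d$ single-channel, run the same construction from $d$ toward $u_{m-1}$ through type-$T(u_{m-1})$ nodes, which is symmetric.

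The step I expect to be the crux is the non-stalling claim inside the induction: that at every vertex there is a legal next move of one of the prescribed \emph{types}, inside the sector $\Psi$, and within transmission range. Lemma~\ref{lem:boundpqr} is the engine that makes this go through, since it keeps $q=u_{i+1}$ permanently ``in sight'' of the current vertex, so $q$ itself (or two nodes even closer to it) always serves as a witness that the required neighbours exist; the care needed is to check that restricting attention to the types $T(u_i),T(u_{i+1})$ (and to the single type in the boundary cases) never deletes the last such witness, and that the second-nearest node used for $v_{j+2}$ is present precisely when the procedure has not already terminated. Verifying this type bookkeeping, together with the range bound $|v_j-q|\le R$, is essentially the whole content of the proof.
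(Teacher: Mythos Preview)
Your proposal is correct and follows essentially the same approach as the paper: both reduce to a single link $(u_i,u_{i+1})$, maintain the invariant that the target $u_{i+1}$ stays within transmission range of the current vertex $v_j$ (so the relevant sector is never empty of admissible-type nodes), and treat the first and last links separately by restricting types and, for the last link, reversing direction. The paper's proof is terser---it asserts $|v_j-v|\le|v_{j-2}-v|$ and concludes ``we will always be able to reach $v$'' without an explicit termination argument---whereas you invoke Lemma~\ref{lem:boundpqr} to get a \emph{strict} decrease of $|v_j-q|$ along the even-indexed subsequence and combine this with finiteness of $\V$ to force halting; this is a genuine (if small) addition of rigor over the paper's version.
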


\begin{lemma}
\label{lem:linkspanner}
$|\P'|\leq t\cdot |\P|$ for $t=(1-2\sin{\tfrac{\theta}{2}})^{-2}$ and we can
assign channels to the links in $\P'$ satisfying the Channel Discontinuity
Constraint.
\end{lemma}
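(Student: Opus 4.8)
The statement splits into a metric bound, $|\P'|\le t\,|\P|$, and the existence of a CDC channel assignment; I would prove them in that order, working link-by-link and concatenating. Fix an intermediate link $(u_i,u_{i+1})$, put $\ell=|u_i-u_{i+1}|$ and $\alpha=1-2\sin(\theta/2)$ (so $t=\alpha^{-2}$), and write $d_j=|v_j-u_{i+1}|$ for the even-indexed vertices of the replacement path $\P'_{u_iu_{i+1}}$. Each time the procedure takes a double hop $v_j\to v_{j+1}\to v_{j+2}$, the nodes $v_{j+1}$ (nearest admissible-type neighbour of $v_j$ in the sector $\Psi$ toward $u_{i+1}$) and $v_{j+2}$ (second-nearest neighbour of type $T(v_{j+1})$ in $\Psi$) both lie in the single sector $\Psi$, and, as I argue below, neither is farther from $v_j$ than $u_{i+1}$ is. I would then invoke Lemma~\ref{lem:boundpqr} with apex $v_j$ three times: on $(v_{j+1},u_{i+1})$ to get $d_{j+1}\le d_j-\alpha|v_j-v_{j+1}|$; on $(v_{j+1},v_{j+2})$ (legitimate since $v_{j+1}$ is also the nearest neighbour of $v_j$ of type $T(v_{j+1})$, so $|v_j-v_{j+1}|\le|v_j-v_{j+2}|$) to get $|v_{j+1}-v_{j+2}|\le|v_j-v_{j+2}|-\alpha|v_j-v_{j+1}|$; and on $(v_{j+2},u_{i+1})$ to get $d_{j+2}\le d_j-\alpha|v_j-v_{j+2}|$.

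Adding the first two inequalities and using $|v_j-v_{j+1}|\le|v_j-v_{j+2}|$, the length $|v_j-v_{j+1}|+|v_{j+1}-v_{j+2}|$ of the double hop is at most $(2-\alpha)|v_j-v_{j+2}|$; the third inequality reads $|v_j-v_{j+2}|\le\alpha^{-1}(d_j-d_{j+2})$, so the double hop costs at most $\tfrac{2-\alpha}{\alpha}(d_j-d_{j+2})$, which is $\le\alpha^{-2}(d_j-d_{j+2})$ because $(2-\alpha)\alpha=1-(1-\alpha)^2\le1$. The $d$-values along the even vertices strictly decrease, and $\P'_{u_iu_{i+1}}$ ends with the single direct edge to $u_{i+1}$ whose length is the last $d$-value (itself $\le\alpha^{-2}$ times itself); summing the double-hop bounds telescopes to $|\P'_{u_iu_{i+1}}|\le\alpha^{-2}d_0=t\ell$. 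The first link $(s,u_1)$ and last link $(u_{m-1},d)$ are handled by the same telescoping, except that when $s$ (resp.\ $d$) has only one channel the replacement path is restricted to nodes of the single type $T(u_1)$ (resp.\ $T(u_{m-1})$), which only shrinks the set of admissible detours. Summing over all links of $\P$ gives $|\P'|\le t\,|\P|$.

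For the channel assignment, label the successive links of $\P'_{u_iu_{i+1}}$ by the alternating sequence $C_2,C',C_2,C',\dots,C_2$, where $C_2$ is the channel of the replaced link $(u_i,u_{i+1})$ and each $C'\in\{C_1,C_3\}$, the channels of $(u_{i-1},u_i)$ and $(u_{i+1},u_{i+2})$. Adjacency is immediate: consecutive links carry $C_2$ and some $C'$, and $C_2\neq C_1$, $C_2\neq C_3$ because $\P$ itself satisfies CDC. Availability follows from a one-line induction: every even-indexed $v_j$, and every $v_{j+1}$, has type $T(u_i)$ or $T(u_{i+1})$, whose channel set contains $C_2\in\C(u_i)\cap\C(u_{i+1})$, so $C_2$ is usable on each $(v_j,v_{j+1})$; and on $(v_{j+1},v_{j+2})$ both endpoints share the type $T(v_{j+1})$, so taking $C'=C_1$ when $T(v_{j+1})=T(u_i)$ and $C'=C_3$ when $T(v_{j+1})=T(u_{i+1})$ makes $C'$ available there. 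At the meeting point of $\P'_{u_{i-1}u_i}$ and $\P'_{u_iu_{i+1}}$ the incoming link carries $C_1$ (the channel of $(u_{i-1},u_i)$) and the outgoing link carries $C_2$, and $C_1\neq C_2$ again by the CDC property of $\P$; the exceptional sub-paths at $s$ and $d$ use only two channels which differ for the same reason. Hence all of $\P'$ carries a valid CDC assignment.

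The step I expect to be the real obstacle is the length estimate --- specifically, showing that the detour to a \emph{second}-nearest neighbour, forced on us by channel alternation, can be charged against the geometric decrease $d_j-d_{j+2}$ with exactly the constant $\alpha^{-2}$. What makes this possible is the structural fact that a double hop is taken only when $u_{i+1}$ is not already a direct spanner-neighbour of $v_j$, which guarantees that the chosen same-type pair $v_{j+1},v_{j+2}$ is outranked by $u_{i+1}$ in the sector $\Psi$ and therefore lies no farther from $v_j$ than $d_j$; this is precisely what keeps Lemma~\ref{lem:boundpqr} applicable with the common apex $v_j$ at every step. Everything else is bookkeeping over the three link types and the telescoping, and the existence of $\P'$ inside $CDCYG_k$ is already furnished by Lemma~\ref{lem:pathexists}.
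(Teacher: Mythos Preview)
Your proof follows essentially the same route as the paper's: apply Lemma~\ref{lem:boundpqr} with apex $v_j$ twice---once on the pair $(v_{j+2},u_{i+1})$ to get a telescoping decrease in the distance $d_j$ to $u_{i+1}$, and once on the pair $(v_{j+1},v_{j+2})$ to bound the double-hop length by a constant times $|v_j-v_{j+2}|$---then combine and sum over links of $\P$; the channel assignment you give is exactly the paper's. Your packaging (bounding each double hop directly by $\alpha^{-2}(d_j-d_{j+2})$ and telescoping once) is in fact tidier than the paper's two-stage summation, which is somewhat loose in tracking its constant $c$.
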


\begin{lemma}
\label{lem:untangling}
If, for some $i$, $\P'_{u_{i},u_{i+1}}$ overlaps with the portion of $\P'$
following $u_{i+1}$, then we can generate a new path $\P''$ such that (i)
$|\P''|\leq |\P'|$, (ii) has fewer such overlaps, and (iii) no two consecutive
links are assigned the same channel.
\end{lemma}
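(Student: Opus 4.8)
The plan is to fix one "bad" segment $\P'_{u_i,u_{i+1}}$ that shares a vertex with the tail of $\P'$ (the part of $\P'$ after $u_{i+1}$), and perform a shortcut that is the standard argument for turning a walk into a path: locate the \emph{last} vertex $w$ along $\P'_{u_i,u_{i+1}}$ that also appears somewhere in the tail, and at the \emph{latest} occurrence of $w$ in the tail, splice the two pieces together, discarding the closed sub-walk in between. Because every link has nonnegative (Euclidean) weight, deleting a closed sub-walk can only shorten the path, giving (i) $|\P''|\le|\P'|$. For (ii), one must argue the splice strictly decreases the relevant overlap count: since we chose $w$ to be the \emph{last} shared vertex of $\P'_{u_i,u_{i+1}}$, the portion of $\P'_{u_i,u_{i+1}}$ that survives in $\P''$ (from $u_i$ up to $w$) no longer meets the tail at all, so that particular overlap between $\P'_{u_i,u_{i+1}}$ and the tail following $u_{i+1}$ is eliminated, while no new intersections of that type are created because we only \emph{removed} vertices. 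Iterating over $i$ then removes all overlaps, and a finite process suffices since each step strictly reduces a nonnegative integer count.

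The substantive part is (iii): maintaining a valid CDC channel assignment across the splice. Here I would use the structure of the channels assigned by the Link Replacement Procedure. Recall from the construction that inside each $\P'_{u_i,u_{i+1}}$ the links alternate between channel $C_2$ (the channel of $(u_i,u_{i+1})$ in $\P$) on the "even" links and a channel in $\{C_1,C_3\}$ on the "odd" links, and $w$, being a $v_j$ for some $j$, lies between an even link and an odd link. At the splice point $w$, the link of $\P''$ entering $w$ comes from the prefix of $\P'_{u_i,u_{i+1}}$ and the link leaving $w$ comes from the tail; I need these two links to have different channels. If a naive splice happens to match the two channels at $w$, I would re-run the channel-assignment rule locally: because at each $v_j$ there is a genuine \emph{choice} between $C_1$ and $C_3$ for the odd link (these are two distinct channels, both different from $C_2$), there is enough freedom to flip the channel on at most a bounded suffix of the prefix so that the colors at $w$ disagree; formally one reassigns channels along the surviving sub-walk of $\P'_{u_i,u_{i+1}}$ from $w$ backwards toward $u_i$, which is always possible by the same alternation argument that made $\P'_{u_i,u_{i+1}}$ itself CDC-valid in Lemma~\ref{lem:linkspanner}. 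The endpoints $u_i$ and $u_{i+1}$ retain their original channels $C_2$, so the splice with the rest of $\P'$ outside this segment is unaffected.

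I expect the channel-recoloring at the splice to be the main obstacle, because one has to verify that the recoloring is consistent simultaneously with (a) the CDC constraint internally along the modified segment, (b) the constraint at $w$ with the first tail link, and (c) the constraint at $u_i$ with the previous segment $\P'_{u_{i-1},u_i}$ — and one must check that these constraints do not conflict. The key facts making this go through are that $C_1,C_2,C_3$ are pairwise the three distinct consecutive channels of the original CDC-path $\P$ (so $C_2\notin\{C_1,C_3\}$ and $C_1\ne C_3$), and that every surviving link of $\P''$ is a link of $CDCYG_k$ on which both relevant channels are available (they are available in $\G$, and the Link Replacement only ever used channels of type $T(u_i)$ or $T(u_{i+1})$, which the intermediate nodes share by construction). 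A minor additional check is the boundary behavior for the first segment $(s,u_1)$ and last segment $(u_{m-1},d)$, where $s$ or $d$ may have only one channel: in those cases the segment uses only one channel throughout, the splice vertex $w$ is an internal node of full type, and the same local recoloring argument applies on the side away from the single-channel endpoint.
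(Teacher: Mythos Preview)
Your splice-at-the-last-shared-vertex argument for (i) and (ii) is essentially the paper's approach in the easy case. The genuine gap is in (iii).

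Your recoloring plan rests on two claims that fail in general. First, you assert $C_{1}\neq C_{3}$; CDC only forces $C_{1}\neq C_{2}$ and $C_{2}\neq C_{3}$, so $C_{1}=C_{3}$ is allowed. Second, and more damaging, you claim a ``genuine choice between $C_{1}$ and $C_{3}$'' on each odd link. There is no such choice: in the Link Replacement construction $v_{j+1}$ and $v_{j+2}$ (for even $j$) are of the \emph{same} type $T(v_{j+1})\in\{T(u_{i}),T(u_{i+1})\}$, and that type is guaranteed to carry $C_{2}$ together with exactly \emph{one} of $C_{1},C_{3}$ (namely $C_{1}$ if the type is $T(u_{i})$, $C_{3}$ if $T(u_{i+1})$). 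Worse, the ``even'' link $(v_{j},v_{j+1})$ may join nodes of \emph{different} types whose only common channel is $C_{2}$, so that link cannot be recolored at all. A backward cascade from the splice point therefore gets stuck immediately; the freedom you invoke simply is not present.

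The paper handles the bad case (splice at $v_{j}$ with $j$ odd and $C_{out}=C_{2}$) not by recoloring but by \emph{rerouting}: it replaces $(v_{j-1},v_{j})$ by the two-link detour $(v_{j-1},v_{j+1}),(v_{j+1},v_{j})$, colored $C_{2}$ and $C_{1}$ (or $C_{3}$). These two links lie in $CDCYG_{k}$ precisely because the spanner stores \emph{two} nearest neighbors per sector and also links those two neighbors to each other; this is exactly why the construction keeps two neighbors rather than one. Because the detour \emph{inserts} edges that were not in $\P'$, the length bound is no longer the trivial ``delete a closed sub-walk'' argument: the paper needs two triangle inequalities (in triangles $v_{j-1}v_{j}v_{j+1}$ and $v_{j}v_{j+1}u_{i+1}$) together with the fact that the discarded portions of $\P'$ from $v_{j+1}$ to $u_{i+1}$ and from $u_{i+1}$ back to $v_{j}$ have length at least $|v_{j+1}-u_{i+1}|$ and $|u_{i+1}-v_{j}|$, respectively.
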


\begin{proof}[Proof of Theorem \ref{thm:sdpath}]
Consider a CDC-path $\P$ between nodes $s$ and $d$. The case where the number of
links in $\P$ is one has been discussed in Remark \ref{rem:oneedge}. If the
number of links in $\P$ is two or more, we may obtain the path $\P'$ in
$CDCYG_{k}$ as described in the link replacement procedure, which may not be
simple. From $\P'$, we may apply Lemma \ref{lem:untangling} to
$\P'_{u_{i},u_{i+1}}$ to remove any overlaps with $\P'_{u_{i+1},d}$ followed by
removal of overlaps with $\P'_{s,u_{i}}$ (for $i=1...m-1$), to obtain $\P''$, a
simple CDC-path in $CDCYG_{k}$ from $s$ to $d$. Note that $\P'_{s,u_{i}}$ and
$\P'_{u_{i+1},d}$ cannot overlap at the same node on $\P'_{u_{i},u_{i+1}}$ as
the original CDC-path is simple. Since we are eliminating all overlaps of
$\P'_{u_{j},u_{j+1}}$ for $j<i$ before $\P'_{u_{i},u_{i+1}}$, this case will not
occur because such a shared node will no longer be shared after the first time
it is found to overlap.  Hence, at each step, we reduce the number of overlaps
and at the final step when we are handling $\P'_{u_{m-1},u_{m}}$, there will be
no overlaps.

By Lemmas \ref{lem:linkspanner} and \ref{lem:untangling}, we know that
$|\P''|\leq |\P'|\leq t\cdot |\P|$. Hence, $CDCYG_{k}$ is a CDC $t$-Spanner for
for $t=(1-2\sin{\tfrac{\theta}{2}})^{-2}$. Now, for each node $v$ in
$CDCYG_{k}$, we have $k$ sectors. In each sector, there are links from $v$ to
two nodes of every type. Hence, there are $O(k|\T|)$ links for each node $v$. In
total, there are $O(k|\T||V|)$ links.
\end{proof}

Our results are of a theoretical nature, since the constants in the bounds might
be too large. However, we use them as evidence that, in practice, connecting
every node to a small number of its nearest neighbors, in each of the sectors
(lobes), either user-defined or explicitly by the directional antennas, will
yield a network with almost same properties as the original network.  Moreover,
a close look of the proof of Theorem \ref{thm:sdpath} reveals that, excluding
some convoluted cases, the bounds also hold if the cost of transmitting a
message between nodes $u,v$  is proportional to $|u-v|^\beta$, for any
$\beta>1$.

The total number of links in $CDCYG_{k}$ seems to be $O(k\cdot 2^{C}\cdot |V|)$
since the total number of types is $O(2^{C})$. We now argue that the number of
links can actually be bounded by $O(k\cdot C^{2}\cdot |V|)$ by constructing the
graph in the following manner. At a node $v$, for each pair $\{C_{a},C_{b}\}$ of
channels, connect to the two nearest neighbors which have these channels. Let
this graph be a CDC t-spanner $CDCYG'_{k}$. The proof of the following lemma is
in the appendix.

\begin{lemma}
\label{lem:quadbounds}
$CDCYG'_{k}$ is a CDC $t$-Spanner where
$t=\dfrac{1}{(1-2\sin{\tfrac{\theta}{2}})^{2}}$ and the number of links in
$CDCYG'_{k}$ is $O(k\cdot C^{2}\cdot |V|)$.
\end{lemma}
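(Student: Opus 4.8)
The plan is to reduce the claim about $CDCYG'_k$ to the already-established Theorem \ref{thm:sdpath} by showing that the Link Replacement and Untangling procedures, which produce the spanning CDC-path $\P''$, only ever require edges that are present in $CDCYG'_k$. Recall that in the construction of $\P'_{u_i,u_{i+1}}$, whenever we step from $v_j$ to $v_{j+1}$ and then to $v_{j+2}$, the two new nodes are the two nearest neighbors of $v_j$, inside the fixed sector $\Psi$, \emph{whose channel set equals either $\C(u_i)$ or $\C(u_{i+1})$}. Thus at every step of the replacement we only ever reach for ``the two nearest neighbors of a given node, within a given sector, having a prescribed channel set $\C(w)$''. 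The key observation is that such a node $w$ is characterized by a single type $\C(w) = T$; but a type $T$ always contains at least two channels $C_a, C_b$ (or exactly one, handled as the degenerate boundary cases in the Link Replacement description), and any node of type $T$ in particular has both $C_a$ and $C_b$ available. Hence, the two nearest neighbors of $v_j$ in sector $\Psi$ of type $T$ are among the two-nearest-neighbors-with-channels-$\{C_a,C_b\}$ recorded by $CDCYG'_k$ — so the edge $(v_j, v_{j+1})$ used by the existence proof is an edge of $CDCYG'_k$.

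Concretely, I would proceed as follows. First, fix a pair $\{C_a, C_b\}$ of channels and, for each node $v$ and each sector $\Psi$ of $v$, let $N_{ab}(v,\Psi)$ be the (at most two) nearest neighbors of $v$ in $\Psi$ among all nodes having both $C_a$ and $C_b$ available; these are exactly the edges $CDCYG'_k$ stores for this pair. Second, revisit the Link Replacement Procedure: each new edge it introduces connects some $v_j$ to one of its two nearest neighbors in a sector $\Psi$ of a prescribed type $T$. Pick any two channels $C_a, C_b \in T$ (if $|T|=1$, this falls under the special first/last-link cases already treated, so assume $|T|\ge 2$). Every node of type $T$ lies in the candidate set for $\{C_a,C_b\}$, so the nearest neighbor of type $T$ is at distance at least that of the nearest $\{C_a,C_b\}$-neighbor; in fact the two nearest type-$T$ neighbors are a subset — restricted further — of the two nearest $\{C_a,C_b\}$-neighbors only when the latter also happen to be type $T$. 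The point I must be careful about is exactly this: the two nearest $\{C_a,C_b\}$-neighbors need not be of type $T$, so $CDCYG'_k$ might not literally contain the edge used by the proof of Theorem \ref{thm:sdpath}. The fix is to redo the existence argument using $\{C_a,C_b\}$-neighbors in place of type-$T$ neighbors throughout: the only properties of $v_{j+1}$ used in Lemmas \ref{lem:pathexists}, \ref{lem:linkspanner}, \ref{lem:untangling} are (a) it lies in the sector $\Psi$ of $v_j$ pointing toward $u_{i+1}$, and (b) it carries the channels needed to continue the CDC assignment ($C_1$ or $C_3$ on one side, $C_2$ on the other). Property (a) is preserved verbatim; for (b), since $\{C_a,C_b\}\subseteq T$ and we only ever need to alternate between $C_2$ and one of $C_1,C_3$ — all of which are channels of the relevant endpoints — choosing $\{C_a,C_b\}$ to include the two channels actually used in the alternation guarantees every $\{C_a,C_b\}$-neighbor can carry them.

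Having re-established that the (re-parametrized) Link Replacement produces a path $\P'$ living entirely in $CDCYG'_k$ with $|\P'|\le t\,|\P|$ and a valid CDC channel assignment, and that Untangling (Lemma \ref{lem:untangling}) only deletes sub-paths and hence stays inside $CDCYG'_k$, the stretch bound $t = (1-2\sin\tfrac{\theta}{2})^{-2}$ follows exactly as in the proof of Theorem \ref{thm:sdpath} — the geometric estimate (Lemma \ref{lem:boundpqr}) is applied to the same $(v_j, v_{j+1}, v_{j+2})$ configurations and is indifferent to how the neighbors were selected. The edge count is immediate: at each node $v$ there are $k$ sectors, and within each sector we store at most two neighbors for each of the $\binom{C}{2} = O(C^2)$ channel pairs, giving $O(k\,C^2)$ edges per node and $O(k\,C^2\,|V|)$ in total. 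I expect the main obstacle to be precisely the mismatch flagged above — making sure that replacing ``type-$T$ nearest neighbors'' by ``$\{C_a,C_b\}$-nearest neighbors'' does not break the channel-assignment bookkeeping in Lemma \ref{lem:linkspanner}; this requires checking that for each replaced link one can commit in advance to a single pair $\{C_a,C_b\}$ that covers all channels the alternation along $\P'_{u_i,u_{i+1}}$ will ever need, which is true because that alternation uses only $C_2$ and (alternately) $C_1$ or $C_3$, and these can be funneled through two fixed channels common to the types in play.
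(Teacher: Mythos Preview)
Your proposal is correct and follows essentially the same route as the paper: redo the Link Replacement using $\{C_a,C_b\}$-nearest-neighbors in place of type-$T$ nearest neighbors, observe that Lemmas \ref{lem:pathexists}, \ref{lem:linkspanner}, \ref{lem:untangling} go through unchanged, and count edges. The paper's version is slightly cleaner on the one point you flag as the ``main obstacle'': rather than trying to funnel the original three-channel alternation $C_1,C_2,C_3$ through a single pair, it simply takes $C_a$ to be the channel shared by $u_i$ and $u_{i+1}$ and $C_b$ to be any other channel available at $u_{i+1}$, and then alternates only $C_a,C_b$ along the entire replacement sub-path---so no reference to $C_1$ is needed at all, and the bookkeeping collapses to the two-channel case directly.
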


We now outline the algorithm to compute the spanner in a distributed manner using $O(n\log{n})$ messages.
Let $p$ be a node, and let $\{\Psi_{1}\dots \Psi_{\frac{\pi}{k}}\}$ be the set of sectors around $p$. For each
type $T$, $p$ needs to know the first two nearest neighbors of type $T$ in each sector $\Psi_{i}$. In order
to find these neighbors, $p$ sends a message, specifying the sector and type. Each node $q$ that hears the message, 
estimates the distance to $p$, and picks a random backoff time $\delta_{q}$ during which it waits and listen to see
how many nodes $w$ it hears whose distance $|w-p|<|p-q|$ . If when the timer of $q$ expires, it did not hear at least 
2 such nodes $w$, then $q$ broadcasts its distance $|q-p|$. For details on the message complexity of this approach, refer
to \cite{Arango2009}.

\section{Simulations}

We have conducted experiments to analyze the performance of SHORTEST-CDC under
several settings. We consider network topologies ranging of 100 to 800 nodes
distributed randomly in a region of $100\times 100$ unit$^{2}$. We analyze two
cases: (i) The density of the network is constant, i.e., the transmission range
is inversely proportional to the number of nodes in the network and (ii) The
transmission range is constant. In the first case, the transmission range varies
from 11 units for 100 nodes to 4 units for 800 nodes and in the second case, the
transmission range of the nodes is set at $5$ units. The number of channels
available in the network is either 10 or 20. The parameters of the system under
consideration are (i) The number of nodes and (ii) The number of channels
available at a node. We evaluate the following performance metrics:
\textbf{\emph{Communication Complexity:}} We evaluate the dependence of the
communication complexity (number of messages sent) of the algorithm on number of
nodes and channels. The worst-case complexity is $O(n^{2})$ in the expanded
graph which is a complexity of $O(C^{2}n^{2})$ in the original network. We are
interested in seeing which parameter has more effect on the communication
complexity and how many messages are sent in practice. 

\noindent\textbf{\emph{Number of Blossoms:}} Since the major portion of the
complexity is due to the formation and detection of blossoms in the graph, we
evaluate the number of blossoms which are formed on average and correlate this
with the communication complexity.
\begin{figure}[p]
\centering
\begin{minipage}{\textwidth}
\centering
\includegraphics[height=0.3\textheight]{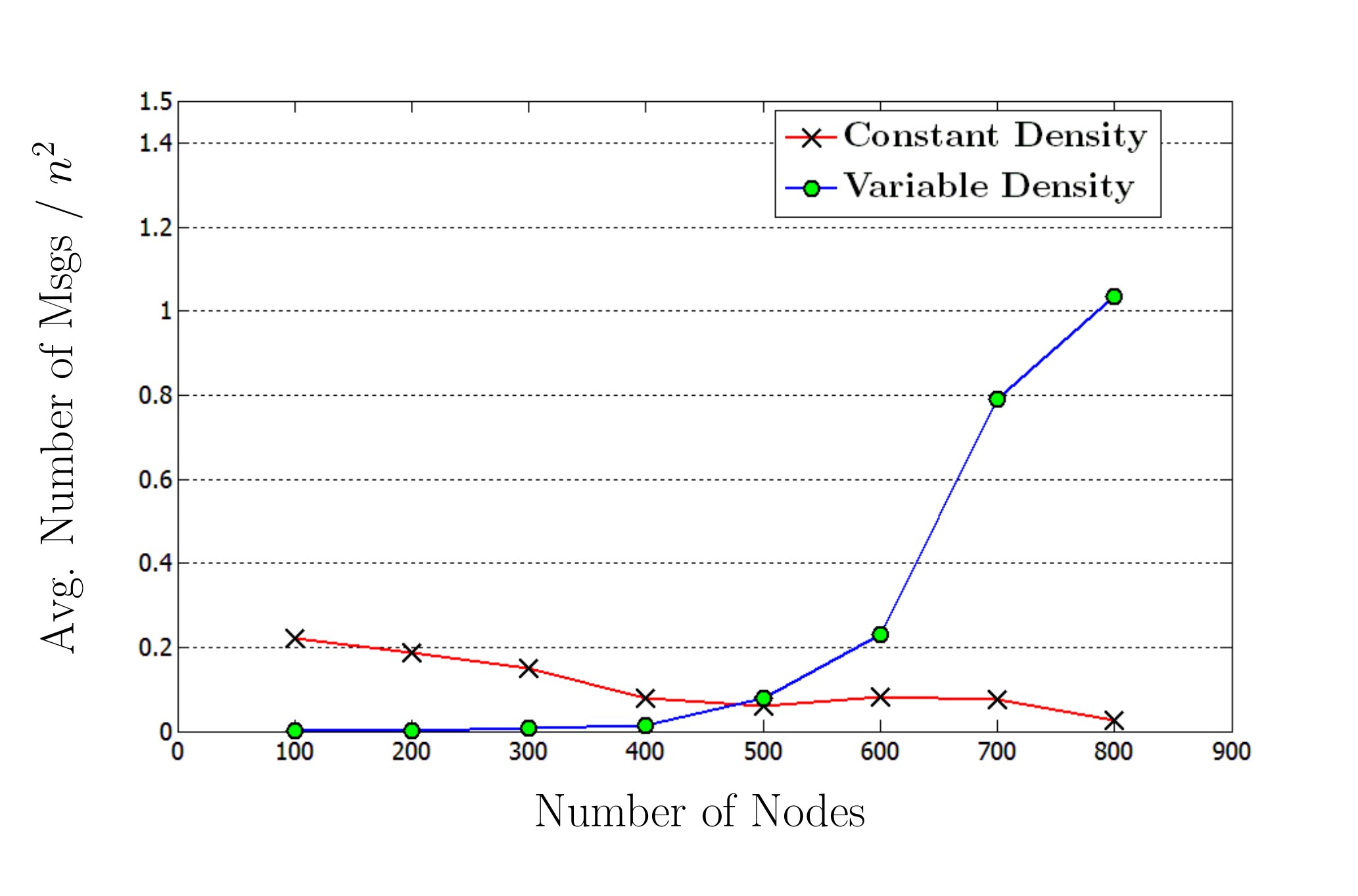}
\caption{Avg. Number of Messages$/n^{2}$ vs Number of nodes in the network}
\label{fig:resnvsmsg}
\end{minipage}
\begin{minipage}{\textwidth}
\centering
\includegraphics[height=0.3\textheight]{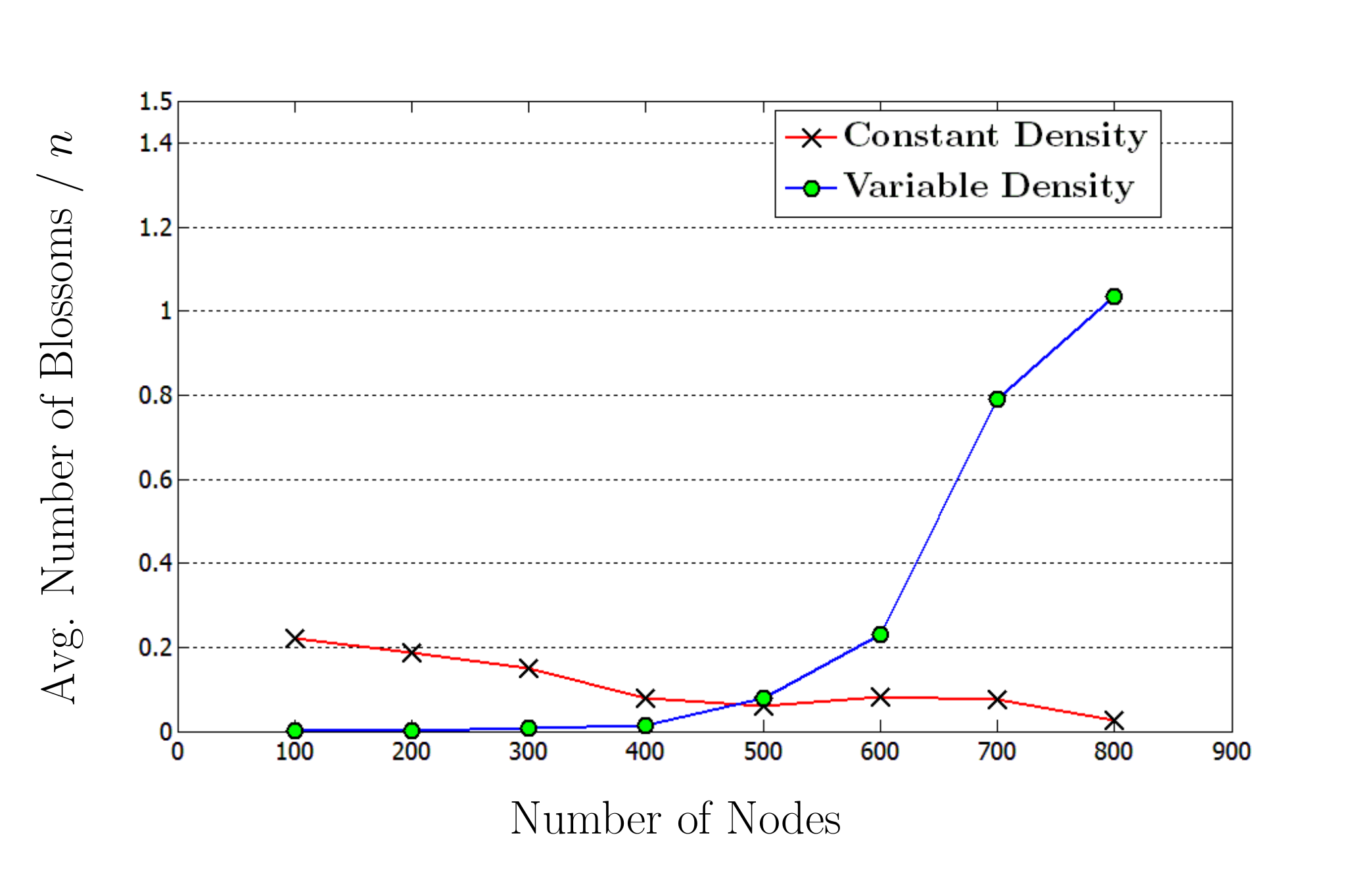}
\caption{Avg. number of blossoms$/n^{2}$ vs Number of Nodes}
\label{fig:resnvsbl}
\end{minipage}
\begin{minipage}{\textwidth}
\centering
\includegraphics[height=0.3\textheight]{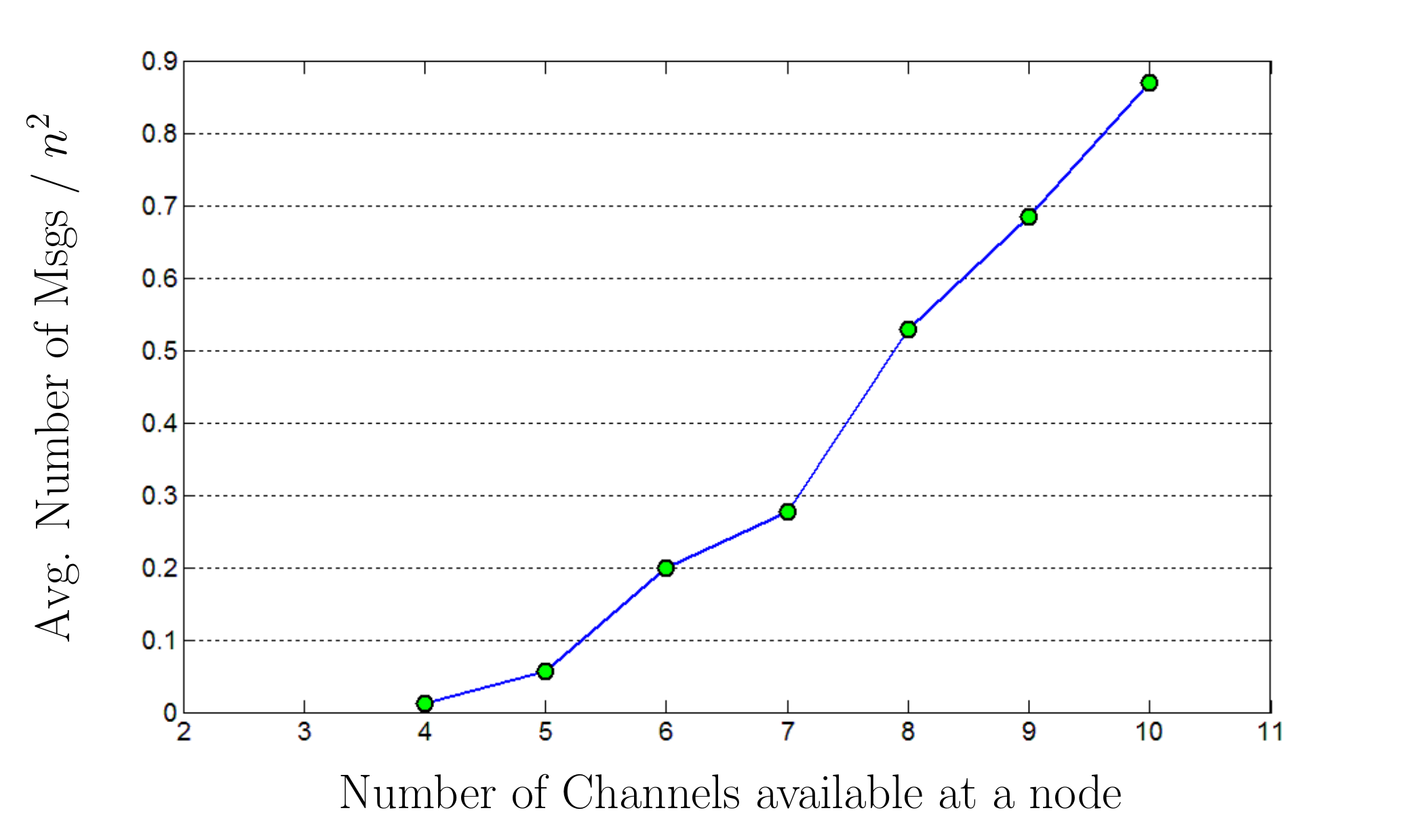}
\caption{Average number of Messages$/n^{2}$ vs Number of Channels available at a
node}
\label{fig:ressch}
\end{minipage}
\end{figure}

Figure \ref{fig:resnvsmsg} shows the ratio of the average number of messages
sent to $n^{2}$ as a function of the number of nodes in the network. Here, the
total number of channels is $10$ and each node picks $4$ channels. In case of
constant density, we can see that as the number of nodes increases, the ratio
decreases very gradually and hence, we may infer that the dependence on the number of nodes is limited. In the case of variable density (constant transmission range), we can see
that there is a steady increase in the ratio which seems to be quadratic in $n$.
This is expected since, the denser the network gets, the more the number of
blossoms and our intuition is that the blossoms are the major factor in the
complexity. In both cases, we can see that the number of messages does not stray
far beyond $n^{2}$ and is well below $C^{2}n^{2}$. Figure \ref{fig:resnvsbl}
shows the ratio of the average number of blossoms to the number of nodes $n$ as
a function of $n$ for the experiments run in Figure \ref{fig:resnvsmsg}. As can
be seen, it closely follows the graph of Figure \ref{fig:resnvsmsg}.  This leads
us to conclude that the number of blossoms is the major factor in the number of
messages.

Figure \ref{fig:ressch} shows the ratio of Average number of messages to $n^{2}$
as a function of the number of channels at a given node. In this case, $n$ is
fixed at $300$ nodes and the number of channels is varied from $4$ to $10$. The
function appears to be close to linear. This leads us to conjecture that, the
number of messages is, in practice, not quadratic in $C$ and could probably be
linear.

\section{Conclusion\label{sec:conc}}
In this paper, we studied the problem of path selection and channel
assignment with channel discontinuity constraint (CDC) in a wireless
infrastructure network.  Given that any two independent links may
share a channel, we showed that the problem of computing a CDC path is
equivalent of computing the minimum weight alternating path. We
developed a distributed algorithm to compute the minimum weight CDC
path with $O(n^{2})$ fixed-size messages.  Through experimental
evaluations on random network topologies, we study the trends on the
number of messages exchanged per path computation as the network size and number
of channels
increases.  In addition, we develop a $t$-spanner for CDC routing, where
the number of links employed is significantly reduced yet guaranteeing
that the minimum-weight CDC path in the spanner is no worse than $t$
times the minimum-weight CDC-path in the original network.

\section{Acknowledgements}

The work in this paper was partially supported by an NSF CAREER Grant 0348000.

\bibliographystyle{plain}
\bibliography{References}

\appendix

\section{Proof of Lemma \ref{lem:pathbounds}}

\medskip
{\bf Lemma \ref{lem:pathbounds}.}\ \ 
{\sl For each node $u\in T$, 
\[
\displaystyle
d_{T}[u] = y(s)+y(u)+\sum_{u\in Q} z(Q)
\]
}

\begin{proof}
The proof of this lemma follows from the description given by Gabow
\cite{Gabow1974}. Let the shortest alternating path from $s$ tox $u$ ending in an
unmatched edge be $\pi = \{u_{0}=s,u_{1},u_{2},\dots,u_{k}=u\}$. The weight of
this path $W(s,u)$ is given by,

\begin{IEEEeqnarray}{rCl}
\displaystyle
W(s,u) & = & \sum_{i=0}^{k-1} w(u_{i},u_{i+1}) \\
& = & \left(w(s,u_{1})+w(u_{2},u_{3})+\dots+w(u_{k-1},u)\right) + \nonumber \\ 
& & \left(w(u_{1},u_{2})+w(u_{3},u_{4})+\dots+w(u_{k-2},u_{k-1})\right) \nonumber \\
& & \\
& = & \left(w(s,u_{1})+w(u_{2},u_{3})+\dots+w(u_{k-1},u)\right) - \nonumber \\ 
& & \left(w(u_{1},u_{2})+w(u_{3},u_{4})+\dots+w(u_{k-2},u_{k-1})\right) \nonumber \\
& & \label{eqn:pathlength}
\end{IEEEeqnarray}
by Fact \ref{fac:matchzero}, since the weights of the matched edges is zero.

Since each edge $(u_{i},u_{i+1})$ is tight, we have
$\displaystyle{y(u_{i})+y(u_{i+1})+\sum_{u_{i},u_{i+1}\in Q} z(Q) =
w(u_{i},u_{i+1})}$. Substituting this into Equation \ref{eqn:pathlength}, we get
the result of the lemma, i.e.,

\begin{IEEEeqnarray}{rCl}
\displaystyle
d_{T}[u] & = & y(s)+y(u)+\sum_{u\in Q} z(Q)
\end{IEEEeqnarray}
since $s$ is not part of any blossoms(it can only be the base of some blossom).
\end{proof}

\section{Proof of Lemmas from Section \ref{spanner}}

{\bf Lemma \ref{lem:pathexists}.}\ \ 
{\sl If there is a path $\P$ from $s$ to $d$ in $\G$, then the path $\P'$ from
$s$ to $d$ in $CDCYG_{k}$ constructed by the Link Replacement Procedure exists.
}
\begin{proof}
Consider a link $(u,v)\in \P$. We first look at the case where $(u,v)$ is an
intermediate link. Consider a step of the Link Replacement procedure where we
are at a node $v_{j}$, for some even $j$. Now, in the sector of $v_{j}$
containing $v$, we check if the $v$ is among the two nearest neighbors of type
$T(v)$. If this is the case, we may directly connect to it. Now, since $v$ is a
neighbor of $u$ and for each $v_{j}$, $|v_{j}-v|\leq |v_{j-2}-v|$, we have the
fact that $v$ is a neighbor of $v_{j}$. Hence, there cannot be a case that
$v_{j}$ does not have any neighbors of type $T(v)$. So, we will always be able
to reach $v$ and hence, there is a path from $u$ to $v$ in $CDCYG_{k}$.

In the case where $(u,v)$ is the first link, we may be restricted to paths
through nodes of type $T(v)$ and the above reasoning holds good in this case
also. In the case where $(u,v)$ is the last link and when $v$ has only one
channel, say $C_{1}$, we may not find a path from $u$ to $v$. If we try to find
a path from $u$ to $v$ through nodes of type $T(u)$, then we may get stuck if
$u$ has no neighbors of type $T(u)$. On the other hand, if we find a path
through nodes of type $T(v)$, we can only assign channel $C_{1}$ to links along
this path and hence, it will not satisfy CDC. However, we know that $T(u)$ has
more than one channel since we are assuming that $m\geq 3$ and hence, there are
two or more edges in $\P$. Therefore, we may find a path from $v$ to $u$ through
nodes of type $T(u)$.

\end{proof}

{\bf Lemma \ref{lem:linkspanner}.}\ \ 
{\sl $|\P'|\leq t\cdot |\P|$ for $t=(1-2\sin{\tfrac{\theta}{2}})^{-2}$ and we
can assign channels to the links in $\P'$ satisfying the Channel Discontinuity
Constraint.
}
\begin{proof}
Let $\P'_{u_{i},u_{i+1}}:\{u_{i}=v_{0},v_{1},...,v_{r-1},v_{r}=u_{i+1}\}$ be the
portion of $\P'$ from $u_{i}$ to $u_{i+1}$ for some $i$. Consider the path from
$u_{i}$ to $u_{i+1}$ $\pi:\{u=v_{0},v_{2},...,v_{r-3},v_{r-1},v_{r}=u_{i+1}\}$
consisting of nodes $v_{j}$, where $j$ is even, in $\P'_{u_{i},u_{i+1}}$
together with the node $u_{i+1}$. Note that $\pi$ is not necessarily a CDC-path.
The links $(v_{j},v_{j+2})$ along with the link $(v_{r-1},u_{i+1})$ as shown in
Figure \ref{fig:yaolinkspanner} constitute the links in $\pi$. For each node
$v_{j}$ on this path, from Lemma \ref{lem:boundpqr}, we know that
\begin{IEEEeqnarray}{l}
|v_{j+2}-u_{i+1}| \leq |v_{j}-u_{i+1}| - (1-2\sin{\tfrac{\theta}{2}})|v_{j}-v_{j+2}| \label{eqn:boundvi}
\end{IEEEeqnarray}
See Figure \ref{fig:yaolinkspanner}. Now, by summation over all even $j$, we get
\begin{IEEEeqnarray}{l}
\displaystyle
\sum_{\substack{j=0\\ j\mbox{ \small{is even}}}}^{r-3} |v_{j+2}-u_{i+1}| \nonumber \\
\leq \sum_{\substack{j=0\\ j\mbox{ \small{is even}}}}^{r-3} |v_{j}-u_{i+1}|-\sum_{\substack{j=0\\ j\mbox{ \small{is even}}}}^{r-3}(1-2\sin{\tfrac{\theta}{2}})|v_{j}-v_{j+2}| \nonumber \\
~
\end{IEEEeqnarray}
Rearranging (details omitted), we get
\begin{IEEEeqnarray}{l}
\displaystyle
\sum_{\substack{j=0\\ j\mbox{ \small{is even}}}}^{r-3} |v_{j}-v_{j+2}| \leq c\cdot |u_{i}-u_{i+1}|\label{eqn:evenpathbound}
\end{IEEEeqnarray}
where $c=(1-2\sin{\tfrac{\theta}{2}})^{-2}$.
Now, we have in $\pi$, for each $v_{j},v_{j+2}$, where $j$ is even and $0\leq
j\leq r-3$, from Lemma \ref{lem:boundpqr},
\begin{IEEEeqnarray}{rCl}
\displaystyle
|v_{j+1}-v_{j+2}| &\leq & |v_{j}-v_{j+2}| - \dfrac{1}{c}\cdot |v_{j}-v_{j+1}| \nonumber \\
\Longrightarrow c\cdot |v_{j}-v_{j+2}| & \geq & |v_{j}-v_{j+1}| + |v_{j+1}-v_{j+2}|  \label{eqn:eachedgebound}
\end{IEEEeqnarray}
since $\dfrac{1}{c}\leq 1$. Combining equations \ref{eqn:evenpathbound} and
\ref{eqn:eachedgebound}, we get . Hence,
\begin{align}
\displaystyle
\sum_{j=0}^{r-2} |v_{j}-v_{j+1}| &\leq c\cdot \sum_{\substack{j=0\\ j\mbox{
\small{is even}}}}^{r-3} |v_{j}-v_{j+2}| \nonumber \\
\sum_{j=0}^{r-1} |v_{j}-v_{j+1}| &\leq c\cdot 
|v_{r-1}-u_{i+1}|+\sum_{\substack{j=0\\ j\mbox{ \small{is even}}}}^{r-3} 
|v_{j}-v_{j+2}| \nonumber \\
&\leq c^{2}\cdot |u_{i}-u_{i+1}|
\end{align}
Hence, $|\P'_{u_{i},u_{i+1}}|\leq t\cdot |u_{i}-u_{i+1}|$ for
$t=(1-2\sin{\tfrac{\theta}{2}})^{-2}$. Consider that the nodes $u_{i}$ and
$u_{i+1}$ share the channel $C_{2}$ and that the links $(u_{i-1},u_{i})$ and
$(u_{i},u_{i+1})$ are assigned the channels $C_{1}$ and $C_{3}$ respectively. We
now have a channel assignment for the path $\pi$ because for every set of nodes
$v_{j},v_{j+1},v_{j+2}$ for $i=0...r-3$ along this path when $i$ is even, we can
assign channels $C_{2},C_{1}$ or $C_{2},C_{3}$ depending on whether $v_{j+1}$
and $v_{j+2}$ are of type $T(u_{i})$ or $T(u_{i+1})$. Finally, we assign the
channel $C_{2}$ to the link $v_{r-1},u_{i+1}$. This channel assignment is shown
in Figure ~\ref{fig:yaolinkspanner}.
Hence, repeating this over all links in $\P'$, we have a channel assignment for
links in $\P'$ and $|\P'| \leq t\cdot |\P|$, for
$t=(1-2\sin{\tfrac{\theta}{2}})^{-2}$.
\end{proof}

\begin{figure*}[t]
\centering

\subfloat[\sl{The removal of overlaps when $C_{out}\neq C_{2}$. The path
\newline $\P'':\{s,...,u_{i},...,v_{j-1},v_{j},w,...,d\}$ (marked as a BLUE
dashed path) removes the portion of $\P'$ from $v_{j}$ back to itself.}]{
\includegraphics[width=0.4\textwidth]{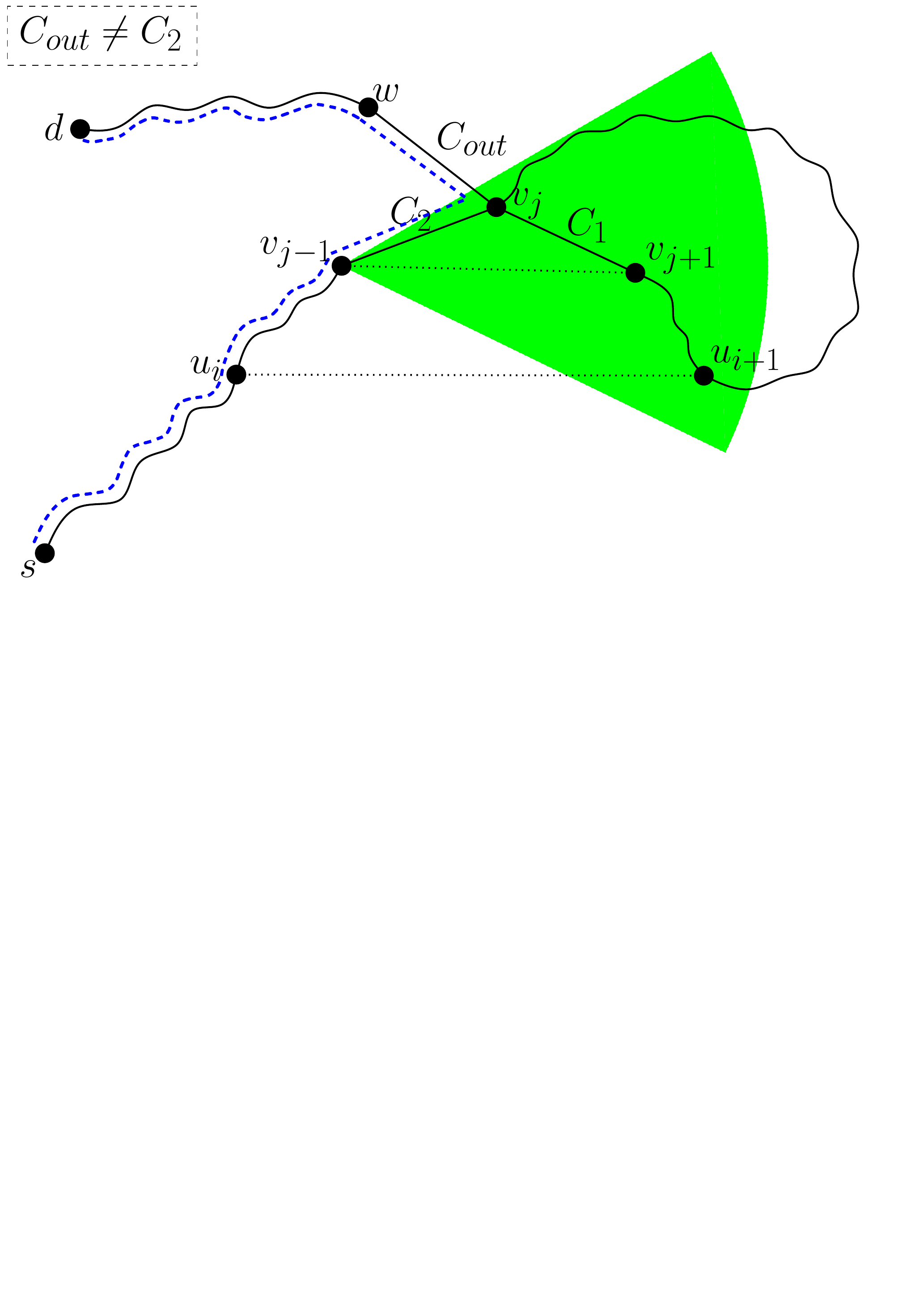}
\label{fig:overlap1}
}
\hfil
\subfloat[\sl{The removal of overlaps when $C_{out}=C_{2}$. The path \newline
$\P'':\{s,...,u_{i},...,v_{j-1},v_{j+1},v_{j},w,...,d\}$ (marked as a BLUE
dashed path) consists of the portion of $\P'$ from $s$ to $v_{j-1}$, the links
$(v_{j-1},v_{j+1})$ and $(v_{j+1},v_{j})$ followed by the portion of $\P'$ from
$v_{j}$ to $d$.}]{
\includegraphics[width=0.4\textwidth]{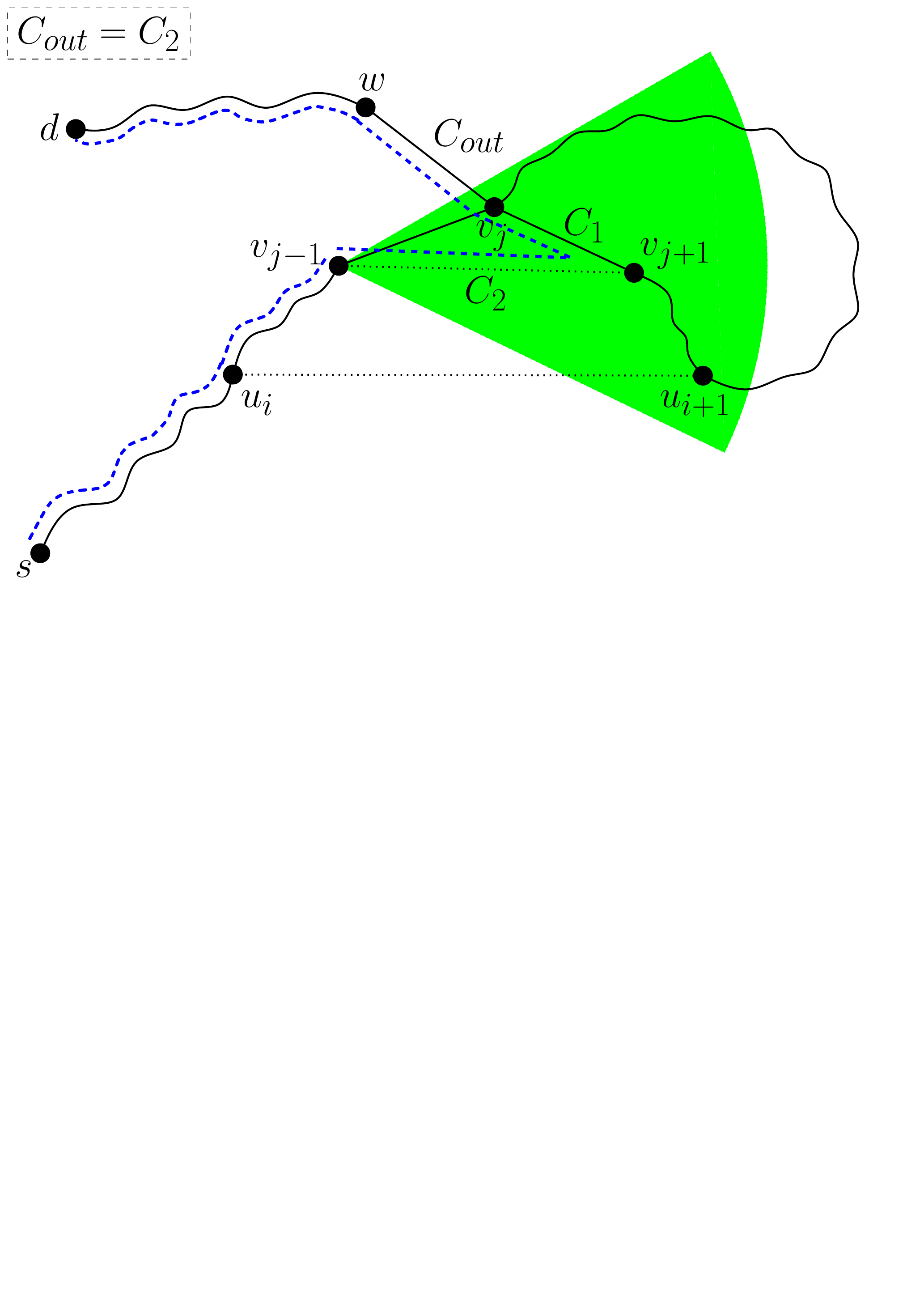}
\label{fig:overlap2}
}

\caption{The case where $\P'$ intersects itself at node $v_{j}$, in
$\P'_{u_{i},u_{i+1}}$, where $j$ is odd. Path $\P'$ is shown as a BLACK solid
path and $\P''$ which does not intersect itself at any node in
$\P'_{u_{i},u_{i+1}}$ is shown as a BLUE dashed path. }
\label{fig:overlap}
\end{figure*}

{\bf Lemma \ref{lem:untangling}.}\ \ 
{\sl If, for some $i$, $\P'_{u_{i},u_{i+1}}$ overlaps with the portion of $\P'$
following $u_{i+1}$, then we can generate a new path $\P''$ such that (i)
$|\P''|\leq |\P'|$, (ii) has fewer such overlaps, and (iii) no two consecutive
links are assigned the same channel.
}
\begin{proof}
Let the portion of $\P'$ preceding $u_{i}$ be $\P'_{s,u_{i}}$ and that following
$u_{i+1}$ be $\P'_{u_{i+1},d}$. Now, either $\P'_{s,u_{i}}$ or $\P'_{u_{i+1},d}$
or both share nodes with $\P'_{u_{i},u_{i+1}}$. We can obtain a new CDC-path
$\P''$ as follows.

We show the CDC-path for the case where $\P'_{u_{i+1},d}$ shares nodes with
$\P'_{u_{i},u_{i+1}}$. Assume that $\P'_{u_{i+1},d}$ uses a node $v_{j}$ in
$\P'_{u_{i},u_{i+1}}$ and this is the last node in $\P'_{u_{i+1},d}$ which is in
$\P'_{u_{i},u_{i+1}}$. Let the channel assigned to the outgoing link from
$v_{j}$ to a node $w$ in the path $\P'$ be $C_{out}$ as shown in Figure
\ref{fig:overlap}. There are three cases for $C_{out}$ - (i) $C_{out}\neq C_{1}$
and $C_{out}\neq C_{2}$, (ii) $C_{out}=C_{1}$ and (iii) $C_{out}=C_{2}$. Now, we
obtain a new path (not necessarily simple) $\P''$ satisfying CDC by replacing
the portion of $\P'$ from $u_{i}$ to $v_{j}$ with a new path
$\P''_{u_{i},v_{j}}$ which we describe below.

There are two cases for the node $v_{j}$ depending on whether $j$ is odd or
even. We first handle the case when $j$ is odd. In this case, $v_{j}$ and
$v_{j+1}$ are in the cone of $v_{j-1}$ which contains $u_{i+1}$. This is
because, according to the link replacement procedure, from $v_{j-1}$, we would
have picked the two nearest neighbors in the cone of $v_{j-1}$ containing
$u_{i+1}$. If $j$ is odd, the two cases where $C_{out}\neq C_{2}$ and
$C_{out}=C_{2}$ are depicted in Figure \ref{fig:overlap}.

If $C_{out}\neq C_{2}$, we obtain $\P''_{u_{i},v_{j}}$ by going from $u_{i}$ to
$v_{j}$ along $\P'_{u_{i},u_{i+1}}$. The $(s,d)$ path obtained using this
$\P_{u_{i},v_{j}}$ satisfies CDC because, according to the channel assignments
for $\P'_{u_{i},u_{i+1}}$, the link $(u_{i},v_{1})$ is not assigned channel
$C_{1}$ and the link $(v_{j-1},v_{j})$ is assigned channel $C_{2}$ which is
different from $C_{out}$. We are not increasing the cost of $\P'$ by doing this.
This is depicted in Figure \ref{fig:overlap1}.

If $C_{out}=C_{2}$, we obtain $\P''_{u_{i},v_{j}}$ by going from $u_{i}$ to
$v_{j-1}$ along $\P'_{u_{i},u_{i+1}}$ and adding the links $(v_{j-1},v_{j+1})$
and $(v_{j+1},v_{j})$. Assigning channels $C_{2}$ and $C_{1}$ to links
$(v_{j-1},v_{j+1})$ and $(v_{j+1},v_{j})$ respectively, we get a path $\P''$
from $s$ to $d$. This is depicted in Figure \ref{fig:overlap2}. Now, we prove
that the cost of $\P''$ is no larger than the cost of $\P'$. Let the portion of
$\P'$ from $s$ to $v_{j-1}$ be $\P'_{s,v_{j-1}}$ and the portion from $v_{j}$ to
$d$ be $\P'_{v_{j},d}$. Applying triangle inequality in the triangles
$(v_{j-1},v_{j},v_{j+1})$ and $(v_{j},v_{j+1},u_{i+1})$, we have
$|v_{j-1}-v_{j+1}| \leq |v_{j-1}-v_{j}| + |v_{j}-v_{j+1}|$ and $|v_{j}-v_{j+1}|
\leq |v_{j+1}-u_{i+1}| + |u_{i+1}-v_{j}|$. Adding, we get,
\begin{IEEEeqnarray}{rCl}
\displaystyle
|v_{j-1}-v_{j+1}| + |v_{j}-v_{j+1}| &\leq& \left( |v_{j-1}-v_{j}| + |v_{j}-v_{j+1}| \right) + \nonumber \\
&& \left( |v_{j+1}-u_{i+1}| + |u_{i+1}-v_{j}| \right) \nonumber \\
&&
\end{IEEEeqnarray}
Adding the cost of the portions of $\P'$ from $s$ to $v_{j-1}$ and from $v_{j}$
to $d$, we get,
\begin{IEEEeqnarray}{rCl}
\displaystyle
|\P''| &\leq & |\P'_{s,v_{j-1}}| + \left( |v_{j-1}-v_{j}| + |v_{j}-v_{j+1}| \right) \nonumber \\
&& +\: \left( |v_{j+1}-u_{i+1}| + |u_{i+1}-v_{j}| \right) + d_{\P'_{v_{j},d}}
\end{IEEEeqnarray}
Since cost of the portions of $\P'$ from $v_{j+1}$ to $u_{i+1}$ and from
$u_{i+1}$ to $v_{j}$ are not less than $|v_{j+1}-u_{i+1}|$ and $|u_{i+1}-v_{j}|$
respectively, $\P''| \leq |\P'|$. The case where $\P'_{s,u_{i}}$ overlaps with
$\P'_{u_{i},u_{i+1}}$ is analogous to the above case. Instead of using the last
node of overlap, we would use the first node of overlap to obtain the shortened
path from $s$ to $d$. The number of overlaps in $\P''$ is less than that in
$\P'$ because in $\P''$, the portion of the path from $u_{i}$ to $v_{j}$ now no
longer overlaps with the portion from $v_{j}$ to $d$.
\end{proof}

{\bf Lemma \ref{lem:quadbounds}.}\ \ 
{\sl $CDCYG'_{k}$ is a CDC $t$-Spanner where
$t=\dfrac{1}{(1-2\sin{\tfrac{\theta}{2}})^{2}}$ and the number of links in
$CDCYG'_{k}$ is $O(k\cdot C^{2}\cdot |V|)$.
}
\begin{proof}
Consider an edge $(u,v)$ which is part of some CDC-path $\P$ in $\G$. Note that
we are still under the assumption that the number of nodes in $\P$ is greater
than 2. Let us consider the three cases: (i) $(u,v)$ is one of the intermediate
links other than the first and last link, (ii) $(u,v)$ is the first link in $\P$
and (iii) $(u,v)$ is the last link in $\P$. Consider the first case. If the only
channel shared between $u$ and $v$ is $C_{a}$, then $v$ has at least one more
channel, say $C_{b}$, available at it. Now, we may construct a path through
nodes which have channels $\{C_{a},C_{b}\}$. This will satisfy CDC because we
may assign channels $C_{a}$ and $C_{b}$ alternately to edges in this path. If
the number of channels shared is two or more, then, by our construction, we may
still construct a path as before between $u$ and $v$ through nodes which have
any pair of channels from the shared channels. In the remaining cases for
$(u,v)$, we may still construct paths from $u$ to $v$ because, again, at least
one of them has two or more channels. Lemmas \ref{lem:pathexists},
\ref{lem:linkspanner} and \ref{lem:untangling} still hold for these paths.
Hence, $CDCYG'_{k}$ is a CDC $t$-Spanner for $\G$. The number of links in
$CDCYG'_{k}$ at a node $v$ is $O(C^{2})$ per sector since we have three links
for each pair of channels $\{C_{a},C_{b}\}$ and the total number of pairs is
$O(C^{2})$. Hence, the total number of links in $CDCYG'_{k}$ is $O(k\cdot 
C^{2}\cdot |V|)$.

\end{proof}

\end{document}